\documentclass[authoryear,review,11pt,a4paper]{elsarticle}

\usepackage{setspace}
\usepackage{fullpage}

\usepackage{color}

\usepackage{amsmath}

\DeclareMathOperator*{\argmin}{arg\,min}
\usepackage{amsfonts}
\usepackage{amssymb}
\usepackage{amsthm}

\usepackage{amscd}

\usepackage{bbm}

\usepackage{algorithm}
\usepackage{algpseudocode}

\usepackage{graphicx}

\usepackage{lscape}

\usepackage{booktabs}
\usepackage{multirow}
\usepackage{float}
\usepackage{caption}
\usepackage{subcaption}

\usepackage{lineno}

\usepackage{hyperref}

\usepackage{listings}
\lstnewenvironment{rc}[1][]{\lstset{language=R}}{}
\biboptions{authoryear}

\newtheorem{prop}{Proposition}
\newproof{pf}{Proof}

\journal{ }

\begin{document}

\begin{frontmatter}

\title{Joint optimisation of interim timing and stopping boundaries in group sequential trials}

\author[TGIUK,ICTU]{Zhangyi He\corref{cor1}\fnref{fn3}}
\ead{zhe1@georgeinstitute.org.uk}

\author[BrisMath]{Feng Yu}

\author[ICTU]{Suzie Cro}

\author[TGIAUS]{Laurent Billot}

\cortext[cor1]{Corresponding author.}
\fntext[fn3]{Affiliation at time of study}

\address[TGIUK]{The George Institute for Global Health, Imperial College London, London W12 7RZ, United Kingdom}
\address[ICTU]{Imperial Clinical Trials Unit, Imperial College London, London W12 7RH, United Kingdom}
\address[BrisMath]{School of Mathematics, University of Bristol, Bristol BS8 1UG, United Kingdom}
\address[TGIAUS]{The George Institute for Global Health, University of New South Wales, Sydney, NSW 2031, Australia}

\begin{abstract}
Group sequential designs (GSDs) are common in confirmatory trials, and their efficiency depends jointly on the timing of interim analyses and the shape of stopping boundaries. Most existing approaches optimise one component while holding the other fixed. Joint optimisation over free critical values has been studied for two- and three-stage efficacy-only designs and for tests with binding futility boundaries. Here we jointly optimise the full interim schedule and the boundary shape within the Hwang--Shih--DeCani and Kim--DeMets spending families, minimising the expected sample size under a specified alternative while controlling overall type I and type II error rates. A convex error-spending constraint prevents excessive early type I error spending. On the canonical information scale, the joint optimum depends only on the error rates, number of analyses and spending family, and is invariant to endpoint type and effect size. Without the convexity constraint, the parametric optima reproduced the unrestricted optima of \citet{brittain1993} to within $0.001$ in expected inflation factor. For three or more looks the jointly optimal design Pareto-dominated the scheduling-only optimal Pocock design, achieving the same or smaller expected sample size with uniformly smaller maximum sample size. Under a prespecified 1\% tolerance, six analyses, five interim plus a final, captured essentially all efficiency attainable with up to seven analyses, reducing expected sample size by 23\% and 33\% at 80\% and 90\% power relative to the fixed-sample design. We illustrate the approach by redesigning the CORRECT trial.
\end{abstract}

\begin{keyword}
Confirmatory trial \sep
Group sequential design \sep
Interim analysis \sep
Stopping boundary \sep
Optimal design \sep
Expected sample size
\end{keyword}

\end{frontmatter}


\section{Introduction}
\label{sec:1}
Group sequential design (GSD) is the most widely used adaptive design in confirmatory trials \citep{hatfield2016}. It allows a trial to stop early for efficacy or futility at preplanned interim analyses without inflating the overall type I error rate \citep{pocock1977,obrien1979,lan1983,kim1987,hwang1990}. Its efficiency depends on two interrelated design choices that jointly determine its operating characteristics. The first is how to shape the stopping boundaries, which is usually characterised by the error-spending functions that allocate type I and type II errors across analyses. The second is when to conduct the interim looks, which is commonly characterised by the information fractions at which these analyses are performed. Together, these choices determine the probability of early stopping at each interim analysis, the maximum sample size (MSS) required, and the expected sample size (ESS) under a given treatment effect size \citep{meurer2021}. Throughout, we also express these quantities on a relative scale using the maximum inflation factor (MIF) and the expected inflation factor (EIF), defined as the ratios of the MSS and ESS, respectively, to the sample size required by the corresponding fixed-sample design. 

Optimal GSDs have been extensively studied, primarily through optimisation of the stopping boundaries for prespecified timings of interim analyses \citep[\textit{e.g.},][]{wang1987,eales1992,chang1996,barber2002,anderson2007,wason2012b,ham2026}. \citet{grayling2021} examined how the optimal timing of interim analyses changes when the variability of the required sample size is taken into account. A complementary line of work has instead fixed the stopping boundaries and optimised the timing of interim analyses. \citet{togo2013} investigated how to time a single interim analysis to minimise the ESS under a specified alternative hypothesis for the Pocock design \citep{pocock1977} and the O'Brien--Fleming (OBF) design \citep{obrien1979}, each implemented via the Lan--DeMets spending function \citep{lan1983}. \citet{xi2017} chose the timing of futility interim analyses jointly with the futility boundary, but their focus was restricted to futility stopping. More recently, \citet{he2025} introduced \texttt{OptimInterim}, an \texttt{R}-based tool that optimises the scheduling of interim looks for given stopping boundaries, minimising the ESS under a specified alternative hypothesis subject to nominal type I and type II error rates. Neither line, however, optimises the full interim schedule jointly with the shape of the efficacy boundary, suggesting that further reductions in ESS may be achievable.

Joint optimisation of the timing of analyses and the stopping boundaries has also been studied. \citet{wason2012a} chose the stage sample sizes of a two-stage design together with the stopping thresholds. \citet{eales1992} derived optimal one-sided tests that stop to accept or reject the null hypothesis, solving a Bayes sequential decision problem by dynamic programming for given group sizes and searching over the group sizes for up to five groups. With equal type I and type II error rates, freeing the group sizes reduced the ESS at the alternative only slightly once the critical values were optimal. \citet{brittain1993} optimised testing times and critical values together, tabulating two- and three-stage efficacy-only designs that minimise the ESS under a specified alternative hypothesis subject to given type I and type II error rates. \citet{jennison2006} compared tests with binding futility boundaries for up to six analyses, using the ESS averaged over the null hypothesis, the alternative and a more optimistic effect size. The average ESS of tests with a monomial error spending function and equal group sizes except the first exceeded that of tests with freely optimised group sizes and critical values by at most 2\% of the fixed sample size. This indicates that restricting the boundary to a low-dimensional family need not substantially reduce efficiency. For boundary optimisation alone, \citet{anderson2007} restricted the boundaries to one-parameter error spending function families \citep{kim1987,hwang1990}, and \citet{wason2015} to a two-parameter extension of the power family of stopping boundaries \citep{pampallona1994}. Each governs the boundary shape with a small number of parameters. Such families retain substantial flexibility in boundary shape while greatly reducing the dimension of the search space, and hence the computational cost of optimisation. To our knowledge, the full interim schedule has not previously been optimised jointly with the boundary shape within such families.

A further challenge in the joint optimisation setting is determining which shapes of stopping boundaries are clinically credible. Because early interim analyses rest on limited accumulated information, allocating a large proportion of type I error to them increases the risk of false-positive stopping decisions driven by unstable estimates. This motivates a convex alpha-spending constraint, under which the rate at which type I error is spent is non-decreasing in information time.

Recently, \citet{li2023} illustrated by simulation that limiting a GSD to one or two interim looks can be inefficient, as it reduces the opportunity for early stopping. Allowing more interim analyses can therefore improve efficiency, although the gains diminish as the number of analyses increases. This is consistent with the diminishing returns reported by \citet{he2025}. From an administrative perspective, planning for between four and eight interim analyses is generally considered sensible \citep{todd2001}. Moreover, the number of analyses required to capture most of the available efficiency gain is seldom clear in advance. Over-specification has drawbacks, because each additional analysis typically requires a larger MSS and a further round of data monitoring, and beyond a certain point yields only a negligible reduction in the ESS. Over-specification may also, in principle, produce adjacent looks so close together that they become operationally indistinguishable. A procedure for selecting the effective number of looks should therefore be able to identify and merge such redundant stages.

To address these challenges, we jointly optimise the scheduling of interim analyses and the shape of stopping boundaries in GSDs, aiming to minimise the ESS under a specified alternative hypothesis while maintaining nominal type I and type II error rates. The stopping boundaries are parameterised through one-parameter spending function families, namely the Hwang--Shih--DeCani (HSD, Gamma) family of \citet{hwang1990} and the Kim--DeMets (KD, Power) family of \citet{kim1987}, with a second shape parameter when a futility boundary is included. A convex error-spending constraint is imposed to exclude clinically implausible designs with excessive early error spending. To avoid over-specifying the number of analyses, we further introduce a stage-merging rule and an automatic criterion for selecting the effective number of analyses. If two successive analyses of an optimised design lie closer together than a prespecified fraction of the total information, the stage-merging rule replaces that design with the optimum for fewer looks, reducing the number until no such gap remains. The rule is applied after optimisation rather than imposed as a constraint within it. The selection criterion then chooses the smallest number of analyses whose ESS lies within a prespecified tolerance of the smallest ESS attainable. We evaluate the operating characteristics of jointly optimised GSDs with up to six interim looks and benchmark the parametric joint optimum, with and without the convexity constraint, against the unrestricted optima of \citet{brittain1993}. We then illustrate the practical application of the proposed approach through a redesign of the CORRECT trial \citep{grothey2013} in previously treated metastatic colorectal cancer.


The scope of this paper is the statistical optimisation of a prespecified analysis grid together with the shape of the efficacy and futility boundaries, for a design fixed before the trial begins. It does not address data-dependent modification of the realised schedule once the trial is under way, nor inference following such modifications. It also leaves aside the geometry of futility rules defined by a common threshold on the observed effect, and the selection of a design under a sponsor-imposed budget on the maximum sample size. Those are distinct problems and are not considered here.

\section{Materials and Methods}
\label{sec:2}
We first review the GSD framework and the scheduling-only optimisation of \citet{he2025}, and then formalise the joint optimisation of interim scheduling and stopping boundaries under a convex error-spending constraint. We describe the stage-merging rule combined with automatic selection of the effective stage number, present the extended \texttt{OptimInterim}, and outline the numerical studies and the CORRECT case study used to evaluate the approach. 

\subsection{Group sequential designs}
\label{sec:21}
We adopt the framework and notation of \citet{he2025}. Consider a group sequential trial with $K>1$ analyses comparing a treatment with a control, where $\delta$ denotes the treatment effect and superiority is established by rejecting $H_{0}:\delta \leq 0$ in favour of $H_{1}:\delta>0$. The overall one-sided type I error rate is $\alpha$ and the type II error rate is $\beta$. The primary outcome $Y$ is assumed to be normally distributed with known common variance $\sigma^{2}$. We further assume that recruitment is suspended at each interim analysis, with all outcome data observed immediately. 
Throughout, $K$ denotes the total number of analyses, comprising $K-1$ interim analyses and the final analysis. 
At the $k$-th analysis, let $N_{t,k}$ and $N_{c,k}$ denote the cumulative numbers of participants with observed outcomes in the treatment and control arms. The information is then $I_{k}=N_{t,k}N_{c,k}/[\sigma^{2}(N_{t,k}+N_{c,k})]$, and the information rate is $t_{k}=I_{k}/I_{K}$, with $0<t_{1}<t_{2}<\cdots<t_{K}=1$. The standardised test statistic $Z_{k}$ follows the canonical multivariate normal distribution with mean $\mu_{k}=\delta\sqrt{I_{k}}$ and covariance $\Sigma_{kk'}=\sqrt{\min(I_{k},I_{k'})/\max(I_{k},I_{k'})}$. We write $\boldsymbol{\mu}_{1:k}$ and $\boldsymbol{\Sigma}_{k \times k}$ for the restricted mean vector and covariance matrix.

At each interim analysis ($k<K$), the trial stops early for efficacy if $Z_{k} \geq u_{k}$ or for futility if $Z_{k} \leq l_{k}$ (binding or non-binding), where $u_{k}$ and $l_{k}$ represent the efficacy and futility stopping boundaries at the $k$-th analysis. Following \citet{he2025}, we define the continuation, rejection and futility regions at the $k$-th analysis as $\mathcal{C}_{k}=\{z \mid l_{k}<z<u_{k}\}$, $\mathcal{R}_{k}=\{z \mid z \geq u_{k}\}$, and $\mathcal{F}_{k}=\{z \mid z \leq l_{k}\}$ with $l_{k}=-\infty$ when no futility boundary is specified, and $l_{K}=u_{K}$ so that $\mathcal{C}_{K}=\emptyset$. These three regions partition the real line at every analysis, and there is a single continuation region $\mathcal{C}_{k}$ governing the conduct of the trial, irrespective of which hypothesis is true. The region entering the calibration of the type I error rate need not be $\mathcal{C}_{k}$, however. Under a binding futility rule the sponsor commits in advance to stopping whenever $Z_{k} \in \mathcal{F}_{k}$, so such sample paths cannot contribute to the type I error rate, whereas under a non-binding rule they may. We therefore write
\begin{linenomath}
	\begin{equation*}
		\mathcal{C}^{\alpha}_{k}
		=
		\begin{cases}
			\mathcal{C}_{k}, & \text{for binding futility,} \\
			\{z \mid z<u_{k}\}, & \text{for non-binding or absent futility,}
		\end{cases}
	\end{equation*}
\end{linenomath}
for the region used in the type I error calculation, while the type II error rate and the ESS are computed over the operating region $\mathcal{C}_{k}$ itself. The superscript indexes the error rate being computed, not the hypothesis assumed to hold. The type I and type II error allocations at the $k$-th look are $\alpha_{k}$ and $\beta_{k}$. 
The stopping boundaries are calibrated through spending functions to preserve the nominal overall type I error rate \citep{lan1983,kim1987,hwang1990}. Details of the calibration procedure are given in \citet{he2025}.

Although the development below is written for a continuous normal outcome, the canonical joint distribution of $\boldsymbol{Z}_{1:K}=(Z_{1},Z_{2},\ldots,Z_{K})^\intercal$ also holds asymptotically for other endpoint types under standard regularity conditions. For binary endpoints, the Wald or score statistic for the difference in proportions is asymptotically normal, with the Fisher information at the $k$-th analysis determined by the cumulative sample size and the control event rate through the standard binary sample-size formula \citep{wassmer2024}. For count endpoints, the Wald statistic for the log rate ratio under a negative binomial model is asymptotically normal, with the Fisher information at the $k$-th analysis determined by the cumulative sample size, exposure times, event rates and overdispersion parameter \citep{mutze2019}. For survival endpoints under proportional hazards, the log-rank statistic is asymptotically normal, with $I_{k}=d_{k}r/(1+r)^{2}$ for allocation ratio $r$ and cumulative events $d_{k}$ \citep{schoenfeld1981,schoenfeld1983}, so that $t_{k}=d_{k}/d_{K}$ is driven by event accumulation rather than participant enrolment. 

\subsection{Joint optimisation of interim scheduling and stopping boundaries}
\label{sec:22}
We let $z_{p}$ denote the $p$-th quantile of the standard normal distribution, $\delta^{*}$ the target effect size under $H_{1}$ and $r$ the allocation ratio between the treatment and control arms. The ratio $r$ is held fixed across analyses, so that $I_{k}$ is proportional to the cumulative sample size. The sample size required for the corresponding fixed-sample design is
\begin{linenomath}
	\begin{equation*}
		N_{0}
		=
		\biggl(\frac{\sigma}{\delta^{*}}\biggr)^{2} \frac{(1+r)^{2}}{r}(z_{1-\alpha}+z_{1-\beta})^{2}.
	\end{equation*}
\end{linenomath}
We denote by $\boldsymbol{\alpha}_{1:K}=(\alpha_{1},\alpha_{2},\ldots,\alpha_{K})^{\intercal}$ and $\boldsymbol{\beta}_{1:K}=(\beta_{1},\beta_{2},\ldots,\beta_{K})^{\intercal}$ the full vectors of type I and type II error allocations. The canonical drift parameter, denoted by $\theta$, is the mean of the test statistic at the final analysis under $H_{1}$, that is, $\theta=\delta^{*}\sqrt{I_{K}}$, so that $\mu_{k}=\theta\sqrt{t_{k}}$ for every $k$. It summarises the total information the design must accumulate to attain the target power. In the corresponding fixed-sample design it takes the value $\theta=z_{1-\alpha}+z_{1-\beta}$, and in a GSD the maximum sample size is inflated in proportion to its square,
\begin{linenomath}
	\begin{equation*}
		N_{K}
		=
		N_{0}\frac{\theta^{2}}{(z_{1-\alpha}+z_{1-\beta})^{2}}.
	\end{equation*}
\end{linenomath}
This is the leading factor appearing in the objective functions below, and its ratio to $N_{0}$ is the MIF. For a given schedule $\boldsymbol{t}_{1:K}=(t_{1},t_{2},\ldots,t_{K})^{\intercal}$ and given error allocations $\boldsymbol{\alpha}_{1:K}$ and $\boldsymbol{\beta}_{1:K}$, the value of $\theta$ is uniquely determined as the one at which the design attains power $1-\beta$. It can be computed by the recursive numerical integration of \citet{armitage1969} or, as here, by multivariate normal integration. 
For any positive integer $k \leq K$, let
\begin{linenomath}
	\begin{equation*}
		P_{k}
		=
		\int_{l_{1}}^{u_{1}} \int_{l_{2}}^{u_{2}} \cdots \int_{l_{k}}^{u_{k}} f_{k}(z_{1},z_{2},\ldots,z_{k} \mid \boldsymbol{\mu}_{1:k},\boldsymbol{\Sigma}_{k \times k}) \,\mathrm{d}z_{k} \cdots \mathrm{d}z_{2}\mathrm{d}z_{1}
	\end{equation*}
\end{linenomath}
denote the probability under $H_{1}$ that the trial is still continuing after the $k$-th analysis, that is, that $Z_{k'} \in \mathcal{C}_{k'}$ for every $k' \leq k$. Here $f_{k}(\,\cdot\, \mid \boldsymbol{\mu}_{1:k},\boldsymbol{\Sigma}_{k \times k})$ is the density of the $k$-variate normal distribution with mean vector $\boldsymbol{\mu}_{1:k}$ and covariance matrix $\boldsymbol{\Sigma}_{k \times k}$. For fixed spending functions, both $\theta$ and $\mathcal{C}_{k}$ depend only on $\boldsymbol{t}_{1:K-1}$, and hence so does $P_{k}$.

Suppose the spending functions are prespecified and ensure control of the type I and type II error rates. As shown in \citet{he2025}, the minimisation of the ESS under $H_{1}$ is then restricted to one over the information rates $\boldsymbol{t}_{1:K-1}$ alone, formulated as
\begin{linenomath}
	\begin{equation}
		\label{eqn:2201}
		\begin{aligned}
      & \underset{\boldsymbol{t}_{1:K-1}}{\text{minimise}}
			& & \frac{N_{0}\theta^{2}}{(z_{1-\alpha}+z_{1-\beta})^{2}} \biggl(t_{1} + \sum_{k=2}^{K}(t_{k}-t_{k-1})P_{k-1} \biggr) \\
			& \text{subject to}
			& & 0<t_{1}<t_{2}<\cdots<t_{K-1}<1.
		\end{aligned}
	\end{equation}
\end{linenomath}
A detailed derivation of Eq.~(\ref{eqn:2201}) is provided in \citet{he2025}.

In unrestricted joint optimisation, the search space grows rapidly with the number of stages, and the optima of \citet{eales1992}, \citet{brittain1993} and \citet{jennison2006} covered at most six analyses. To reduce the search space while retaining flexibility in the shape of the stopping boundaries, we parameterise the stopping boundaries using one-parameter spending function families \citep[\textit{e.g.},][]{kim1987,hwang1990,anderson2007}, with a second shape parameter when a futility boundary is included. We consider the HSD family \citep{hwang1990}, defined through the alpha-spending function
\begin{linenomath}
	\begin{equation}
		\label{eqn:2202}
		\alpha(t;\gamma)
		=
		\begin{cases}
			\alpha \cdot \dfrac{1-e^{-\gamma t}}{1-e^{-\gamma}} & \text{if } \gamma \neq 0, \\[6pt]
			\alpha \cdot t & \text{if } \gamma = 0,
		\end{cases}
	\end{equation}
\end{linenomath}
for $t \in [0,1]$. The shape parameter $\gamma$ controls how type I error is distributed across analyses. Negative values of $\gamma$ yield conservative early spending, resembling the OBF design \citep{obrien1979}, whereas positive values yield aggressive early spending, resembling the Pocock design \citep{pocock1977}. The two conventional shapes correspond to particular values of $\gamma$. Matching the Lan--DeMets alpha-spending functions \citep{lan1983} in supremum norm over $t \in (0,1]$ gives $\gamma=-4$ for the OBF shape and $\gamma=1$ for the Pocock shape. The maximum discrepancies in cumulative type I error are $0.0016$ and $0.0002$ respectively at $\alpha=0.025$. The Pocock match is invariant to $\alpha$, since $\alpha$ enters both spending functions as a multiplicative factor, whereas the OBF match depends on $\alpha$ through $z_{1-\alpha/2}$ and becomes more conservative as $\alpha$ decreases. When a futility stopping boundary is incorporated, the same parametric form is used for the beta-spending function, with corresponding shape parameter $\gamma^{*}$. We also consider the KD family \citep{kim1987}, defined as 
\begin{linenomath}
	\begin{equation*}
    \alpha(t;\rho)=\alpha \cdot t^{\rho}
	\end{equation*}
\end{linenomath}
for $t \in [0,1]$ and $\rho>0$, where $\rho \geq 1$ corresponds to conservative early spending and $\rho<1$ to aggressive early spending. The corresponding matches to the conventional shapes are $\rho=3.44$ for OBF and $\rho=0.74$ for Pocock, with maximum discrepancies of $0.0008$ and $0.0007$ at $\alpha=0.025$, and again only the OBF value depends on $\alpha$. Under a given parametric spending function family, $\boldsymbol{\alpha}_{1:K}$ is fully determined by $\boldsymbol{t}_{1:K}$ with the shape parameter $\gamma$ (or $\rho$), and similarly for $\boldsymbol{\beta}_{1:K}$ when a futility stopping boundary is included.

We impose a convex error-spending constraint, motivated by the principle that analyses based on less accumulated information should not receive disproportionately large error allocations. Operationally, this is achieved by requiring the alpha-spending function $\alpha(t;\,\cdot\,)$ and the beta-spending function $\beta(t;\,\cdot\,)$ to be convex in $t$. Convexity means that the marginal rate of error spending is non-decreasing in information time, so that the error spent per unit of information, $(\alpha(t_{k})-\alpha(t_{k-1}))/(t_{k}-t_{k-1})$, is non-decreasing in $k$ for any spacing of the information rates. It does not constrain the raw stagewise increments, which also depend on the interval widths and may decrease when a long interval is followed by a short one. The constraint therefore excludes aggressive early spending from the feasible region without restricting the schedule itself. For the beta-spending function the rationale is weaker than for alpha-spending, since an early futility stop does not inflate the type I error rate. We adopt it as a design convention that treats the two boundaries in the same way. It is not required for error control. 

We let $\boldsymbol{\phi}$ be the boundary shape parameters, where $\boldsymbol{\phi}=\gamma$ or $\boldsymbol{\phi}=(\gamma,\gamma^{*})^{\intercal}$ under the HSD family and $\boldsymbol{\phi}=\rho$ or $\boldsymbol{\phi}=(\rho,\rho^{*})^{\intercal}$ under the KD family, with $\rho^{*}$ the shape parameter of the KD beta-spending function. Convexity requires $\gamma \leq 0$ for the HSD family and $\rho \geq 1$ for the KD family, so the feasible region retains the OBF analogue ($\gamma=-4$, or $\rho=3.44$) and excludes the Pocock analogue ($\gamma=1$, or $\rho=0.74$). Its boundary, $\gamma=0$ or $\rho=1$, corresponds to uniform alpha-spending. Convexity bounds $\gamma$ and $\gamma^{*}$ only from above and $\rho$ and $\rho^{*}$ only from below, so we also impose the search limits $\gamma_{\min}=\gamma^{*}_{\min}=-6$ and $\rho_{\max}=\rho^{*}_{\max}=6$. The feasible region of $\boldsymbol{\phi}$ is then a box, $\Phi = [\gamma_{\min}, 0]$ or $\Phi = [\gamma_{\min}, 0] \times [\gamma^{*}_{\min}, 0]$ under the HSD family, and $\Phi = [1, \rho_{\max}]$ or $\Phi = [1, \rho_{\max}] \times [1, \rho^{*}_{\max}]$ under the KD family, depending on whether a futility boundary is included. The joint optimisation problem can then be formulated as
\begin{linenomath}
	\begin{equation}
		\label{eqn:2203}
		\begin{aligned}
			& \underset{\boldsymbol{t}_{1:K-1},\,\boldsymbol{\phi}}{\text{minimise}}
      & & \frac{N_{0}\theta^{2}}{(z_{1-\alpha}+z_{1-\beta})^{2}} \biggl(t_{1} + \sum_{k=2}^{K}(t_{k}-t_{k-1})P_{k-1} \biggr) \\
			& \text{subject to}
			& & 0<t_{1}<t_{2}<\cdots<t_{K-1}<1 \\
			&
			& & \boldsymbol{\phi} \in \Phi,
		\end{aligned}
	\end{equation}
\end{linenomath}
where both $\theta$ and $P_{k-1}$ are functions of $\boldsymbol{t}_{1:K-1}$ and $\boldsymbol{\phi}$. The box constraints $\boldsymbol{\phi} \in \Phi$ can be handled within the optimiser directly, without the need for penalty terms or inequality checks. Eq.~(\ref{eqn:2203}) minimises the ESS under $H_{1}$ alone and places no constraint on the MSS, which is used only to choose among designs whose ESS is practically equal. We say that one design Pareto-dominates another at a common stage number if neither its EIF under $H_{1}$ nor its MIF is larger and at least one is smaller.

Two restrictions separate Eq.~(\ref{eqn:2203}) from the fully unconstrained problem \citep{eales1992,brittain1993,jennison2006}, in which the critical values $\boldsymbol{u}_{1:K}$ and $\boldsymbol{l}_{1:K}$ are themselves free. The first restricts the boundary shape to a parametric spending function family, replacing $K-1$ free critical values by a single shape parameter, or by two when futility stopping is included. The second is the convex error-spending constraint, which confines $\boldsymbol{\phi}$ to the half of that family for which cumulative spending is convex. The designs reported below are accordingly optimal conditional on the chosen family, the error rates and the number of analyses. 

As a consequence of the derivation of Eq.~(\ref{eqn:2201}) in \citet{he2025}, the joint optimum defined by Eq.~(\ref{eqn:2203}) is invariant to the scale of the design problem, in the following sense.

\begin{prop}
	\label{prop:2201}
	Fix the type I and type II error rates $\alpha$ and $\beta$, the stage number $K$, the futility specification (none, binding or non-binding), and the spending function family together with its feasible region $\Phi$. Under the canonical joint distribution of $\boldsymbol{Z}_{1:K}$, the joint optimum of Eq.~(\ref{eqn:2203}) satisfies
	\begin{linenomath}
		\begin{equation*}
			(\hat{\boldsymbol{t}}_{1:K-1}, \hat{\boldsymbol{\phi}})
			=
			\argmin_{\substack{0<t_{1}<\cdots<t_{K-1}<1 \\ \boldsymbol{\phi} \in \Phi}} \theta^{2} \biggl(t_{1} + \sum_{k=2}^{K}(t_{k}-t_{k-1})P_{k-1} \biggr)
		\end{equation*}
	\end{linenomath}
	and therefore depends only on $(\alpha,\beta,K)$, the futility specification and the spending function family with its feasible region $\Phi$. In particular, $(\hat{\boldsymbol{t}}_{1:K-1}, \hat{\boldsymbol{\phi}})$ is independent of $\delta^{*}$, $\sigma^{2}$ and the allocation ratio $r$, which enter the ESS only through the multiplicative constant $N_{0}$. When the minimiser is not unique, the statement applies to the set of minimisers, and hence to any optimum selected from it.
\end{prop}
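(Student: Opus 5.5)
The plan is to factor the objective of Eq.~(\ref{eqn:2203}) into a positive constant times a function of $(\boldsymbol{t}_{1:K-1},\boldsymbol{\phi})$ alone. The constant is $N_{0}/(z_{1-\alpha}+z_{1-\beta})^{2}$, which involves $\delta^{*}$, $\sigma^{2}$ and $r$ only through $N_{0}$. It is strictly positive, because $\alpha,\beta<1/2$ makes $z_{1-\alpha}+z_{1-\beta}>0$. Multiplying an objective by a positive constant does not change its set of minimisers over a fixed feasible set. The feasible set $\{0<t_{1}<\cdots<t_{K-1}<1\}\times\Phi$ itself depends only on $K$ and the family. So the set of minimisers coincides with that of $\theta^{2}\bigl(t_{1}+\sum_{k=2}^{K}(t_{k}-t_{k-1})P_{k-1}\bigr)$. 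This gives the displayed argmin identity, including the non-unique case, since it is an equality of sets.

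The substantive step is to show that $\theta$ and each $P_{k-1}$ are functions of $(\boldsymbol{t}_{1:K-1},\boldsymbol{\phi})$ and $(\alpha,\beta,K)$, the futility specification and the family only. I would argue this in three stages.

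\begin{enumerate}
\item \emph{Allocations.} Under the spending family, $\alpha_{k}=\alpha(t_{k};\boldsymbol{\phi})-\alpha(t_{k-1};\boldsymbol{\phi})$, and similarly for $\beta_{k}$. These depend only on $\alpha$, $\beta$, $\boldsymbol{t}$ and $\boldsymbol{\phi}$.
\item \emph{Efficacy boundaries.} Under $H_{0}$ at $\delta=0$, the canonical distribution of $\boldsymbol{Z}_{1:K}$ has mean zero and covariance $\Sigma_{kk'}=\sqrt{\min(t_{k},t_{k'})/\max(t_{k},t_{k'})}$. This uses $I_{k}/I_{k'}=t_{k}/t_{k'}$, which holds because $r$ is fixed so that $I_{k}$ is proportional to the cumulative sample size. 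The recursive calibration of the $u_{k}$ therefore depends only on $\boldsymbol{t}$ and the $\alpha_{k}$, provided the futility boundaries entering $\mathcal{C}^{\alpha}_{k}$ are also scale-free. In the non-binding or absent case they do not enter at all.
\item \emph{Futility boundaries and $\theta$.} Under $H_{1}$, $\mu_{k}=\theta\sqrt{t_{k}}$ with the same covariance. The futility boundaries $l_{k}$ are calibrated from the $\beta_{k}$ under this distribution. The requirement that power equal $1-\beta$ then defines $\theta$ implicitly as a function of $\boldsymbol{t}$, $\boldsymbol{\phi}$, $\alpha$, $\beta$ and the futility type. Once $\theta$, $\boldsymbol{u}$ and $\boldsymbol{l}$ are known, $P_{k-1}$ is a multivariate normal integral over the $\mathcal{C}_{k'}$ with parameters $(\theta\sqrt{t_{k}})$ and $\Sigma$. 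None of these quantities involves $\delta^{*}$, $\sigma^{2}$ or $r$.
\end{enumerate}

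The main obstacle is the binding-futility case. There the $l_{k}$ (which depend on $\theta$) feed into the type I calibration of the $u_{k}$, so $\boldsymbol{u}$, $\boldsymbol{l}$ and $\theta$ are determined jointly rather than sequentially. I would handle this by treating the calibration as the fixed-point or root-finding system described in \citet{he2025}. Every equation in that system is expressed purely in canonical quantities: the error allocations, $\boldsymbol{t}$, $\Sigma$ and the drift $\theta\sqrt{t_{k}}$. Any solution of the system is therefore a function of those inputs alone, and scale-invariance follows without having to prove uniqueness separately; I would take as given that the paper's calibration procedure yields a well-defined $\theta$.

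Finally, I would note that under other endpoints only the asymptotic canonical structure is used. Endpoint-specific features enter $I_{k}$ only through a multiplicative factor, which cancels in $t_{k}$, and through the fixed-sample size $N_{0}$.
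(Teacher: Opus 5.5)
Your proposal is correct and follows essentially the same route as the paper. Both factor out the positive constant $N_{0}(z_{1-\alpha}+z_{1-\beta})^{-2}$, which the paper simply rewrites as $(\sigma/\delta^{*})^{2}(1+r)^{2}/r$ rather than arguing positivity via $\alpha,\beta<1/2$. Both then show that the boundaries, $\theta$ and hence every $P_{k-1}$ depend only on canonical quantities, through $\Sigma_{kk'}=\sqrt{\min(t_{k},t_{k'})/\max(t_{k},t_{k'})}$ and $\mu_{k}=\theta\sqrt{t_{k}}$, with the boundaries and drift calibrated jointly. Your separate handling of the binding-futility calibration system and of the argmin as an equality of sets just spells out what the paper's proof states more briefly.
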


\begin{pf}
	Write the objective of Eq.~(\ref{eqn:2203}) as $N_{0}(z_{1-\alpha}+z_{1-\beta})^{-2} \, G(\boldsymbol{t}_{1:K-1},\boldsymbol{\phi})$, where $G$ denotes the product of $\theta^{2}$ with the bracketed expression. The leading factor $N_{0}(z_{1-\alpha}+z_{1-\beta})^{-2}=(\sigma/\delta^{*})^{2}(1+r)^{2}/r$ is a positive constant that does not involve $(\boldsymbol{t}_{1:K-1},\boldsymbol{\phi})$. Minimising the objective is therefore equivalent to minimising $G$, and it suffices to show that $G$ is a function of $(\boldsymbol{t}_{1:K-1},\boldsymbol{\phi})$ and $(\alpha,\beta,K)$ alone.

	Under a given spending function family, $\boldsymbol{\alpha}_{1:K}$ and $\boldsymbol{\beta}_{1:K}$ are determined by $\boldsymbol{t}_{1:K}$ and $\boldsymbol{\phi}$ together with the nominal rates $\alpha$ and $\beta$. Since $I_{k}/I_{k'}=t_{k}/t_{k'}$, the covariance matrix satisfies $\Sigma_{kk'}=\sqrt{\min(t_{k},t_{k'})/\max(t_{k},t_{k'})}$ and so is a function of $\boldsymbol{t}_{1:K}$ only. The stopping boundaries $\boldsymbol{u}_{1:K}$ and $\boldsymbol{l}_{1:K}$ and the canonical drift $\theta$ are obtained jointly, by calibrating $\boldsymbol{\alpha}_{1:K}$ and $\boldsymbol{\beta}_{1:K}$ against $\boldsymbol{\Sigma}_{K \times K}$ and the mean vector $\mu_{k}=\theta\sqrt{t_{k}}$ derived below, since beta-spending is evaluated under $H_{1}$. Here $\theta$ is the value at which the design attains power $1-\beta$, and it is determined by $\boldsymbol{t}_{1:K}$, $\boldsymbol{\alpha}_{1:K}$ and $\boldsymbol{\beta}_{1:K}$ as described earlier in this section. Hence $\boldsymbol{u}_{1:K}$, $\boldsymbol{l}_{1:K}$, the continuation regions $\mathcal{C}_{k}$ and $\theta$ are all functions of $(\boldsymbol{t}_{1:K-1},\boldsymbol{\phi})$ and $(\alpha,\beta,K)$ alone.

	It remains to treat the mean vector. By definition of the canonical drift, $\theta=\delta^{*}\sqrt{I_{K}}$ under $H_{1}$, so that for every $k$,
	\begin{linenomath}
		\begin{equation*}
			\mu_{k}
			=
			\delta^{*}\sqrt{I_{k}}
			=
			\delta^{*}\sqrt{I_{K}} \, \sqrt{I_{k}/I_{K}}
			=
			\theta\sqrt{t_{k}},
		\end{equation*}
	\end{linenomath}
	which involves neither $\delta^{*}$ nor $\sigma$ separately. Every $P_{m}$ appearing in $G$ is therefore a function of $(\boldsymbol{t}_{1:K-1},\boldsymbol{\phi})$ and $(\alpha,\beta,K)$, and so is $G$ itself. Its minimiser over the feasible region inherits the same property.
\end{pf}

\noindent
Proposition~\ref{prop:2201} extends the corresponding result of \citet{he2025} from scheduling-only optimisation to the joint optimisation of interim timing and stopping boundaries. Because the two-step criterion and the cross-$K$ refinement of Section~\ref{sec:24} use only the EIF and MIF, which do not involve $N_{0}$, the reported designs inherit the same invariance. A numerical verification across continuous, binary, count and time-to-event endpoints is reported in Supplemental Material, Section~\href{run:./ZH2023_Supplemental_Material.pdf}{S1}.

\subsection{Stage merging and automatic selection of effective stages}
\label{sec:23}
The joint optimisation problem in Eq.~(\ref{eqn:2203}) takes the stage number $K$ as a fixed input, leaving open the practical question of how $K$ should be determined. Two distinct considerations bear on that choice, and we address them through two separate devices. The first is that additional analyses yield diminishing reductions in the ESS while continuing to commit a larger MSS and a further round of data monitoring, so that beyond some number of analyses a further analysis is no longer justified. This consideration is handled by an automatic selection of the effective number of analyses, governed by a tolerance on the ESS. The second is that nothing in the formulation of Eq.~(\ref{eqn:2203}) prevents the optimiser, when $K$ is larger than the problem effectively requires, from placing adjacent information rates so close together that the corresponding interim looks become operationally indistinguishable. This consideration is handled by a stage-merging rule. Imposing a penalty on the minimum gap between adjacent information rates to prevent looks from collapsing in this way would move the solution away from the optimum of the unpenalised problem. It would also hide potentially useful information about the appropriate number of analyses. We thus treat collapse, where it arises, as informative and address it after the optimisation rather than as a constraint within it. As we solve Eq.~(\ref{eqn:2203}) at every nominal $K$ from 2 to a specified maximum, resolving a collapse amounts to retrieving the optimum already computed at a smaller $K$, so no new optimisation is needed.

Let $(\hat{\boldsymbol{t}}_{1:K-1}^{(K)}, \hat{\boldsymbol{\phi}}^{(K)})$ be the joint optimum obtained at nominal $K$, with $\hat{t}_{0}^{(K)}=0$ and $\hat{t}_{K}^{(K)}=1$ by convention. We define the between-analysis information gaps, with the first measured from information zero, as $\Delta t_{k}^{(K)}=\hat{t}_{k}^{(K)}-\hat{t}_{k-1}^{(K)}$ for $k=1,2,\ldots,K$. Given a prespecified merging threshold $\epsilon$, we say that the design collapses if $\min_{k} \Delta t_{k}^{(K)}<\epsilon$. In that case, we define the effective stage number $K_{\text{eff}}^{(K)}$ by passing in turn to the optima of Eq.~(\ref{eqn:2203}) at $K-1, K-2, \ldots$ until the resulting optimum satisfies $\min_{k} \Delta t_{k}\geq\epsilon$. If the design does not collapse, $K_{\text{eff}}^{(K)}=K$. The merging threshold is specified on the information rate scale, which ensures that the same $\epsilon$ is meaningful across trials of different sizes. A default value of $\epsilon=0.05$ requires successive analyses to be separated by at least 5\% of the total information. The choice of threshold may be adapted to the operational context. What makes two looks operationally distinguishable depends on the rate at which information accrues. For non-survival endpoints this is the recruitment rate, and for survival endpoints the event accrual rate. It also depends on the fixed cost of convening a data monitoring committee and preparing an interim report. 

Based on the stage-merging rule, we introduce an automatic selection of the effective number of analyses. For each nominal $K$ between 2 and a specified maximum $K_{\max}$, we compute the merged solution, with the effective stage number denoted $K_{\text{eff}}^{(K)}$, and the corresponding ESS under $H_{1}$, denoted $\mathbb{E}(N^{(K)} \mid H_{1})$. Let $\mathbb{E}(N^{\star} \mid H_{1})=\min_{K} \mathbb{E}(N^{(K)} \mid H_{1})$ represent the smallest ESS achievable across all nominal $K$. The recommended effective stage number $K^{\star}$ can then be defined as
\begin{linenomath}
	\begin{equation}
		\label{eqn:2301}
		K^{\star}
		=
		\min\bigl\{K_{\text{eff}}^{(K)}: \mathbb{E}(N^{(K)} \mid H_{1}) \leq (1+\tau) \cdot \mathbb{E}(N^{\star} \mid H_{1})\bigr\},
	\end{equation}
\end{linenomath}
where $\tau$ is a prespecified tolerance parameter with default value $\tau=0.01$, corresponding to a 1\% ESS tolerance. That is, $K^{\star}$ is the effective stage number of the smallest nominal $K$ whose ESS under $H_{1}$ is within $\tau$ of the minimum achievable value. In all configurations examined with $K \le 7$, the MSS increased with $K$, and the refinement step described in Section~\href{run:./ZH2023_Supplemental_Material.pdf}{S2} requires it not to decrease. The smallest nominal $K$ whose ESS was within the tolerance therefore also had the smallest MSS among the qualifying candidates. We do not claim this as a general property of the canonical joint distribution, as adding an analysis enlarges the feasible set without constraining the MSS of the ESS-minimising solution. 

\subsection{\texttt{OptimInterim}}
\label{sec:24}
We extend \texttt{OptimInterim}, the \texttt{R}-based tool of \citet{he2025} for scheduling-only optimisation, to jointly optimise the timing of interim analyses and the shape of the stopping boundaries in GSDs. The aim is to minimise the ESS under $H_{1}$ while controlling overall type I and type II error rates. The extended \texttt{OptimInterim} retains the flexibility of the original version, supporting both one- and two-sided tests and allowing either binding or non-binding futility stopping rules. The joint optimisation is implemented for both the HSD family of \citet{hwang1990} and the KD family of \citet{kim1987}.

The interim timing is parameterised through a softmax transformation of the between-analysis spacings. The joint minimisation over that parameterisation and $\boldsymbol{\phi}$ is performed by a two-phase scheme, a global controlled random search with local mutation \citep{kaelo2006} followed by a BOBYQA polish \citep{powell2009} from a pool of diversified starts. Operating characteristics are computed by multivariate normal integration \citep{genz1992,genz2024} rather than by simulation. Full details of the parameterisation, the optimiser settings and the integration tolerances are given in Section~\href{run:./ZH2023_Supplemental_Material.pdf}{S2}. The choice of BOBYQA for the polish step is supported by a benchmark comparison with nine alternative optimisers, reported in Section~\href{run:./ZH2023_Supplemental_Material.pdf}{S3}. The operating characteristics are cross-validated against Monte Carlo simulation in Section~\href{run:./ZH2023_Supplemental_Material.pdf}{S4}.

Because Eq.~(\ref{eqn:2203}) is solved separately at each nominal $K$ and its objective is nearly flat near the optimum, solutions at neighbouring $K$ need not be mutually consistent, so a refinement step is applied across $K$. The designs reported in Section~\ref{sec:3} are therefore the outcome of a two-step criterion: the ESS under $H_{1}$ is minimised to within a tolerance of $10^{-3}$ in EIF, and the MSS is then minimised among the designs attaining that tolerance. The cross-$K$ requirements imposed by the refinement, and the basis for this tolerance, are given in Section~\href{run:./ZH2023_Supplemental_Material.pdf}{S2}.

\subsection{Performance evaluation}
\label{sec:25}
Except in the case study, we considered two-arm superiority trials under equal randomisation, comparing a treatment with a control, and used the \texttt{rpact} package \citep{wassmer2024} to compute the stopping boundaries and to translate designs into sample sizes and event counts. The operating characteristics of interest are the MSS and the ESSs under $H_{0}$, $H_{0}/H_{1}$ and $H_{1}$, where $H_{0}/H_{1}$ represents a target effect size of $\delta=\delta^{*}/2$. As in \citet{he2025}, the inflation factor is independent of the standardised effect size, so we instead present the EIFs under $H_{0}$, $H_{0}/H_{1}$ and $H_{1}$, together with the MIF, as defined in Section~\ref{sec:1}.

\subsubsection{Operating characteristics evaluation}
\label{sec:251}
We evaluated the operating characteristics of jointly optimised GSDs across a range of stage numbers, error rates and stopping boundaries. We considered a continuous primary endpoint and assumed that each participant’s response is independent and identically normally distributed with known common standard deviation $\sigma=1$. The null hypothesis $H_{0}:\delta \leq 0$ is tested against the alternative hypothesis $H_{1}:\delta>0$ with assumed effect size of $\delta^{*}=0.5$. We fixed the one-sided type I error rate at $\alpha=0.025$ and varied the type II error rate over $\beta \in \{0.1,0.2\}$, corresponding to 90\% and 80\% power, respectively.

For each combination of type I and type II error rates, we evaluated jointly optimised GSDs under both the HSD family of \citet{hwang1990} and the KD family of \citet{kim1987}. We initially focused on efficacy-only designs, which are the more common starting point in confirmatory trial planning and the simpler setting in which to isolate the effect of joint optimisation on the ESS. For each configuration, we considered cases with up to six interim looks, corresponding to a total of seven stages including the final analysis, generated by the extended \texttt{OptimInterim}. 

We compared the jointly optimal designs against two scheduling-only optimal GSDs from \citet{he2025}, the optimal OBF design and the optimal Pocock design. In each case, the stopping boundary is implemented with the Lan--DeMets alpha-spending approximation \citep{lan1983}, and only the information rates are optimised. The two reference designs represent the scheduling-only counterparts of the two most widely used conventional boundary shapes, isolating the incremental efficiency gain from jointly optimising boundary shape in addition to interim timing.

For each $K$ and each spending function family, we report the jointly optimal information rates $\hat{\boldsymbol{t}}_{1:K-1}$ and shape parameters. The efficacy shape is denoted $\hat{\gamma}$ under the HSD family and $\hat{\rho}$ under the KD family, with $\hat{\gamma}^{*}$ or $\hat{\rho}^{*}$ for futility where applicable. We also report the effective number of stages $K_{\text{eff}}$ after applying the stage-merging rule at $\epsilon=0.05$, and the recommended $K^{\star}$ from Eq.~(\ref{eqn:2301}) at $\tau=0.01$. We examined their sensitivity to $\epsilon \in \{0.03, 0.10\}$, roughly half and double the default, and to $\tau \in \{0.005, 0.02, 0.05\}$. We report the MIF and EIFs under $H_{0}$, $H_{0}/H_{1}$ and $H_{1}$ for each jointly optimal design and for the two reference designs, to quantify both the additional ESS saving and any associated MSS inflation. To assess how $K^{\star}$ depends on the range searched, we further computed the efficacy-only optima at $K=8$ under both families and both power levels, applied the stage-merging rule to them and recomputed $K^{\star}$ over $K \le 8$. 

\subsubsection{Benchmark against unrestricted joint optima}
\label{sec:252}
We benchmarked the HSD family joint optima, with and without the convex alpha-spending constraint, against the unrestricted joint optima of \citet{brittain1993}, obtained by simultaneous optimisation over information rates and critical values with no restriction to a parametric boundary family. These unrestricted optima were taken from their published results rather than recomputed here. The unrestricted optima of \citet{eales1992} and \citet{jennison2006} were not used as benchmarks, because their tests have binding futility boundaries and other error rates or optimality criteria. We considered $K \in \{2, 3\}$ with $(\alpha,\beta) \in \{(0.025, 0.1), (0.025, 0.2)\}$ and efficacy-only stopping, the configurations for which \citet{brittain1993} reported numerical optima over both critical values and interim timings. Without the constraint, the parametric optimum is unconstrained in the shape parameter alone. We then compared both parametric optima with the unrestricted numerical optima of \citet{brittain1993}, reporting the EIF under $H_{1}$.

\subsubsection{Scaling analysis}
\label{sec:253}
Adding interim analyses lowers the EIF with diminishing returns while raising the MIF, and this trade-off underlies the choice of $K^{\star}$ in Section~\ref{sec:23}. To describe each trend with a few interpretable parameters, and to compare designs, we fitted simple curves in $K$ to the EIF and the MIF of the jointly optimal design. At $(\alpha,\beta) \in \{(0.025, 0.1), (0.025, 0.2)\}$, we ran the joint optimisation for $K \in \{2, 3, \ldots, 7\}$ under the HSD and KD families with efficacy stopping. We fitted the power law $\mathrm{EIF}_{K} \approx a + bK^{-p}$ and the analogous law $\mathrm{MIF}_{K} \approx a' + b'K^{q}$ by nonlinear least squares. In the EIF law, $a$ is the limit approached as $K$ grows and $p$ is the rate of approach, which is one in the discretisation theory of optimal stopping for a one-dimensional diffusion \citep{dupuis2005}. In the MIF law, $q<0$ with $b'<0$ corresponds to an MIF that rises towards a ceiling $a'$. We report $\hat{p}$ and $\hat{q}$ with their standard errors and goodness-of-fit summaries. For comparison, we fitted the same laws to the scheduling-only optimal OBF and Pocock designs of \citet{he2025}. 

To visualise this trade-off, we plot the $(\mathrm{EIF}_{K}, \mathrm{MIF}_{K})$ frontier with points labelled by $K$, so that the marginal ESS saving and marginal MSS cost of each additional interim analysis can be compared. This comparison illustrates the rationale for the $K^{\star}$ selection rule described in Section~\ref{sec:23}. 

\subsection{Case study: the CORRECT trial}
\label{sec:26}
We show the practical use of the proposed approach through a redesign of the CORRECT study \citep{grothey2013}, a Phase III survival trial of regorafenib in previously treated metastatic colorectal cancer. The survival-endpoint setting demonstrates the asymptotic applicability of the joint optimisation framework beyond the continuous-endpoint setting in which the framework is developed. It is also the most operationally demanding setting. The information-time schedule on the event-count scale must be translated into a calendar-time schedule through an accrual and survival model, and the resulting sample size and trial duration respond differently to the design choices than in the continuous and binary cases. CORRECT also randomised patients unequally between arms, so the redesign relies on the invariance of the joint optimum to the allocation ratio $r$ established in Proposition~\ref{prop:2201}. The redesign compared the jointly optimal HSD family design with scheduling-only optimal OBF and Pocock designs on both the event-count scale and the calendar-time scale, quantifying the efficiency gains attributable to joint optimisation of boundary shape and interim timing.

\subsubsection{Trial background and redesign}
\label{sec:261}
CORRECT was an international, multicentre, placebo-controlled, double-blind, randomised Phase III study comparing regorafenib to placebo, both plus best supportive care. The primary endpoint was overall survival (OS), and patients were randomised in a 2:1 ratio, so that $r=2$. The design assumed a target hazard ratio of $\psi^{*}=0.75$, with 90\% power at a one-sided type I error rate of 0.025 and an assumed control-arm median OS of 4.5 months. It required 582 deaths for the final analysis, from a planned enrolment of about 690 patients. Interim monitoring of OS adopted the Lan--DeMets alpha spending function with an OBF boundary shape, and the futility boundaries were calculated independently of the efficacy boundary, so futility stopping was non-binding.

Two interim analyses were planned, the first at approximately 174 events (30\% information) for futility only, and the second at approximately 408 events (70\% information) for both efficacy and futility \citep{fda2012}. In practice, they were conducted at 301 events (51.7\%) and 432 events (74.2\%), and the trial stopped for efficacy at the second. Neither schedule was chosen on efficiency grounds, and CORRECT is in this respect representative of the practice in which interim timing is inherited from administrative and reporting constraints. As OS was the primary endpoint, its schedule was not tied to that of another endpoint and was in principle free to be set on statistical grounds. The redesign is thus a design-level counterfactual.

We redesigned the OS endpoint under a configuration allowing up to $K_{\max}=7$ total stages, corresponding to at most six interim analyses. The stage-merging rule and the selection criterion of Section~\ref{sec:23} were applied at their defaults, $\epsilon=0.05$ and $\tau=0.01$, so that successive analyses had to be at least 5\% of the maximum number of events apart. All trial parameters other than the shape of stopping boundaries, the number and timing of interim analyses, and the resulting maximum numbers of events and subjects were kept identical to the original. We compared the jointly optimal HSD family design against the scheduling-only optimal OBF and Pocock designs of \citet{he2025}, with the original CORRECT configuration included as a historical reference design at its realised information times. The redesign candidates used efficacy stopping only, so the reference design was evaluated on its efficacy schedule alone, with a single efficacy interim analysis at 432 events (74.2\%) before the final analysis, to keep the comparison like for like. The original futility rule was specified as a threshold on the observed hazard ratio rather than through a beta-spending function. Its operating characteristics are reported separately, because a design that can stop for futility is not comparable with efficacy-only designs in expected sample size under $H_{0}$.

\subsubsection{Calendar-time analysis at the CORRECT baseline}
\label{sec:262}
On the event-count scale, the operating characteristics are determined by the information-time schedule, the spending function shape and the fixed-design number of events. Because $I_{k}=d_{k}r/(1+r)^{2}$ involves neither the accrual pattern nor the baseline hazard, neither the schedule nor the shape depends on them, and neither do the event-count operating characteristics. However, the operationally relevant quantities, the ESS $\mathbb{E}(N \mid H_{h})$ and the expected trial duration (ETD) $\mathbb{E}(T \mid H_{h})$, with $H_{h}$ denoting $H_{0}$, $H_{0}/H_{1}$ or $H_{1}$, depend on these additional design inputs, which we now specify at the CORRECT baseline.

We adopted the planned accrual model of the original protocol, namely 30 patients per month following an initial four-month ramp-up period, together with a projected drop-out rate of 3\% per year \citep{fda2012}. The ramp-up was represented as a piecewise-constant accrual intensity of 15 patients per month over the first four months. Analyses were event-driven, with each triggered when the corresponding target event count was reached and administrative censoring applied at the time of the final analysis. We adopted exponential survival with a control-arm median OS of 4.5 months. Each candidate design was sized separately to its own maximum number of events $d_{K}$ under the standard size-for-events convention, and the accrual duration was chosen so that the maximum number of subjects $N_{K}$ scaled linearly with $d_{K}$. The scaling was set by the original CORRECT design, which had $N_{K}=690$ over a 25-month accrual period. For each candidate design, we report $\mathbb{E}(N \mid H_{h})$ and $\mathbb{E}(T \mid H_{h})$ under $H_{0}$, $H_{0}/H_{1}$ and $H_{1}$, together with the MSS $N_{K}$ and the maximum trial duration (MTD) $T_{K}$, the calendar time of the final analysis under $H_{1}$. The event-accrual curve was computed with the formulas of \citet{rubinstein1981} and \citet{lachin1986} for exponential survival.

\section{Results}
\label{sec:3}
Section~\ref{sec:31} reports the jointly optimal efficacy-only designs and their operating characteristics, which serve as reference tables for every endpoint type considered in Section~\ref{sec:21}. Section~\ref{sec:32} benchmarks these designs against the unrestricted optima of \citet{brittain1993}, Section~\ref{sec:33} examines how their efficiency scales with the number of analyses, and Section~\ref{sec:34} applies them to the CORRECT trial.

\subsection{Operating characteristics}
\label{sec:31}
Table~\ref{tab:311} reports the jointly optimal information rates and shape parameters under the HSD and KD families for up to $K=7$ stages, at a one-sided type I error rate of 0.025 and type II error rates of 0.1 and 0.2. It also reports the effective stage number $K_{\text{eff}}$ after stage merging and the recommended $K^{\star}$. Figure~\ref{fig:311} shows the corresponding MIFs and EIFs under $H_{0}$, $H_{0}/H_{1}$ and $H_{1}$, alongside those of the scheduling-only optimal OBF and Pocock designs of \citet{he2025}. Results under GSDs incorporating both efficacy and non-binding futility stopping boundaries are reported in Section~\href{run:./ZH2023_Supplemental_Material.pdf}{S5}.

\begin{table}[!ht]
	\centering
	\begin{tabular}{llcccccc}
	\hline
	Family & $\beta$ & $K$ & $\hat{t}_{1:K-1}$ (\%) & $\hat{\phi}$ & $K_{\text{eff}}$ & MIF & EIF$(H_1)$ \\
	\hline
	HSD & 0.1 & 2 & 51.0 & 0.000 & 2 & 1.075 & 0.777 \\
	 &  & 3 & 37.7/65.7 & 0.000 & 3 & 1.106 & 0.719 \\
	 &  & 4 & 31.5/51.4/72.7 & 0.000 & 4 & 1.122 & 0.692 \\
	 &  & 5 & 27.8/44.5/61.4/77.6 & 0.000 & 5 & 1.134 & 0.678 \\
	 &  & \textbf{6} & 24.8/38.7/53.3/65.5/80.0 & 0.000 & 6 & 1.140 & 0.669 \\
	 &  & 7 & 23.4/35.0/47.0/55.8/67.8/81.8 & 0.000 & 7 & 1.145 & 0.663 \\
	\cline{2-8}
	 & 0.2 & 2 & 55.2 & -0.538 & 2 & 1.068 & 0.848 \\
	 &  & 3 & 43.2/70.3 & -0.835 & 3 & 1.084 & 0.806 \\
	 &  & 4 & 36.4/56.0/76.1 & -0.860 & 4 & 1.097 & 0.786 \\
	 &  & 5 & 31.9/49.2/66.0/80.2 & -0.879 & 5 & 1.105 & 0.775 \\
	 &  & \textbf{6} & 28.6/43.0/58.7/69.6/82.7 & -0.887 & 6 & 1.111 & 0.769 \\
	 &  & 7 & 27.6/38.8/51.5/58.1/69.9/83.4 & -0.905 & 7 & 1.113 & 0.765 \\
	\hline
	KD & 0.1 & 2 & 51.0 & 1.000 & 2 & 1.075 & 0.777 \\
	 &  & 3 & 37.7/65.7 & 1.000 & 3 & 1.106 & 0.719 \\
	 &  & 4 & 31.5/51.4/72.7 & 1.000 & 4 & 1.122 & 0.693 \\
	 &  & 5 & 27.7/44.4/61.5/77.5 & 1.000 & 5 & 1.134 & 0.678 \\
	 &  & \textbf{6} & 24.9/38.7/53.5/65.8/80.1 & 1.000 & 6 & 1.141 & 0.669 \\
	 &  & 7 & 23.4/35.0/47.0/55.8/67.8/81.8 & 1.000 & 7 & 1.145 & 0.663 \\
	\cline{2-8}
	 & 0.2 & 2 & 55.4 & 1.244 & 2 & 1.066 & 0.848 \\
	 &  & 3 & 42.7/69.3 & 1.278 & 3 & 1.090 & 0.806 \\
	 &  & 4 & 36.6/55.7/76.2 & 1.377 & 4 & 1.096 & 0.786 \\
	 &  & 5 & 31.9/48.6/65.4/79.6 & 1.397 & 5 & 1.104 & 0.775 \\
	 &  & \textbf{6} & 27.9/42.1/57.6/68.7/81.7 & 1.412 & 6 & 1.108 & 0.768 \\
	 &  & 7 & 26.8/37.9/50.6/57.6/68.8/82.3 & 1.435 & 7 & 1.108 & 0.764 \\
	\hline
	\end{tabular}
	\caption{Jointly optimal information rates (\%) and shape parameters under the HSD family and the KD family for efficacy-only GSDs incorporating up to six interim analyses with a one-sided type I error rate of 0.025. The shape parameter column $\hat{\phi}$ corresponds to $\hat{\gamma}$ under the HSD family and to $\hat{\rho}$ under the KD family. The recommended $K^{\star}$ under each combination of family and power level is highlighted in bold. The columns $K_{\text{eff}}$, MIF and EIF$(H_{1})$ give the effective number of stages after stage merging at $\epsilon=0.05$, the maximum inflation factor and the expected inflation factor under $H_{1}$.}
	\label{tab:311}
\end{table}

\begin{figure}[!ht]
	\centering
	\includegraphics[width=\linewidth]{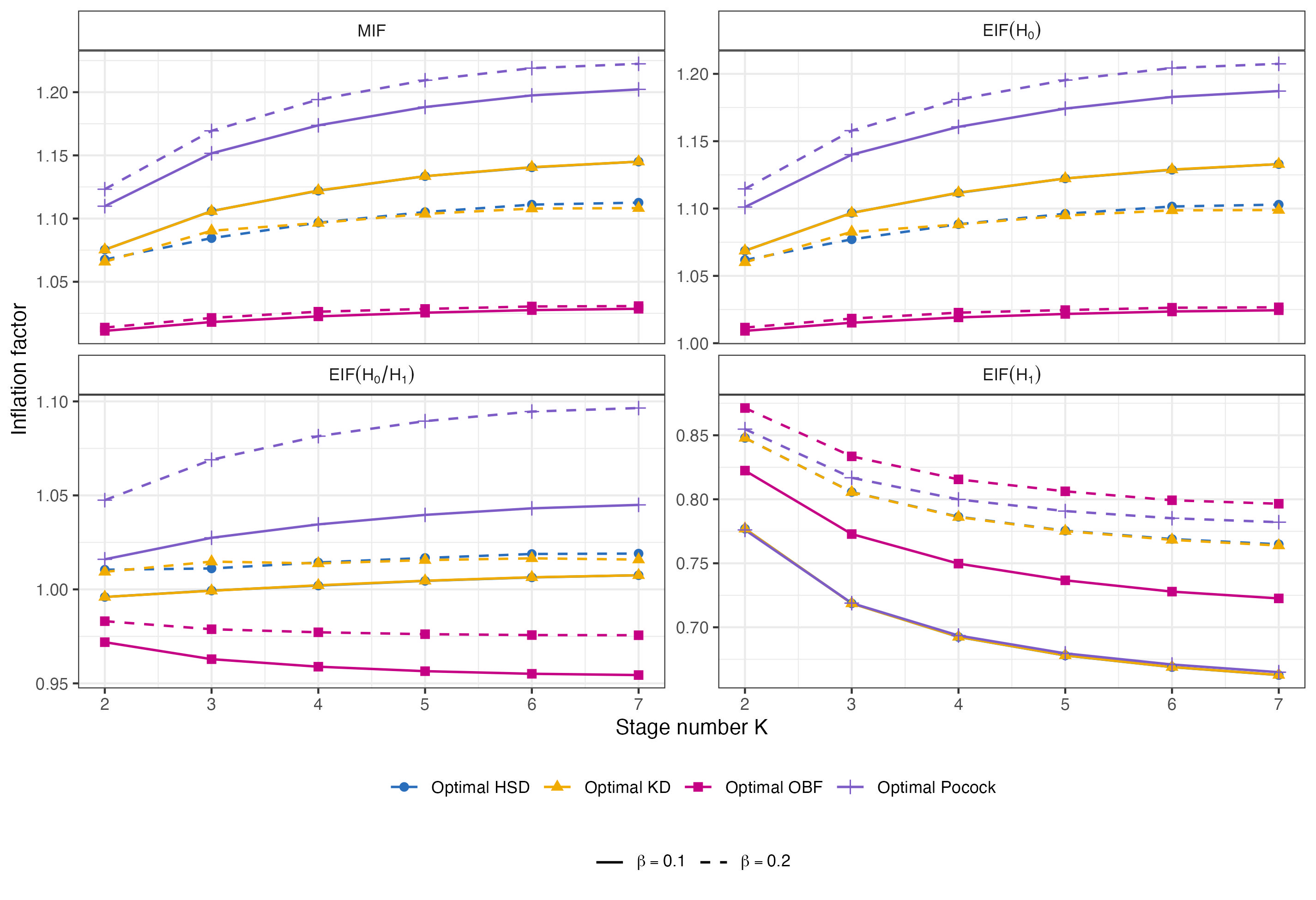}
	\caption{Operating characteristics for jointly optimised GSDs under the HSD family and the KD family, incorporating up to six interim analyses for efficacy, with a one-sided type I error rate of 0.025. The two families are overlaid within each panel. The operating characteristics of the scheduling-only optimal Pocock and OBF GSDs of \citet{he2025} are also shown for comparison.}
	\label{fig:311}
\end{figure}

The two families yielded operating characteristics that were numerically indistinguishable on the EIF and MIF scales. Across all $(K, \beta)$ cells in Table~\ref{tab:311}, the HSD and KD joint optima agreed on EIF under $H_{1}$ to within $0.001$ and on MIF to within $0.006$, despite parametrically distinct shape parameters at $\beta=0.2$. At $\beta=0.1$ both optima lie at uniform spending, where the two families coincide, so the agreement there holds by construction. The jointly optimal information rates also agreed across the two families to within $1.2$ percentage points at each $K$. At $\beta=0.1$, both families pinned the shape parameter at the boundary of the feasible region for all $K$ from 2 to 7 ($\hat{\gamma}=0$ under HSD and $\hat{\rho}=1$ under KD), the point of uniform alpha-spending. At $\beta=0.2$, the optimum was interior, with $\hat{\gamma}$ ranging from $-0.54$ at $K=2$ to $-0.90$ at $K=7$ and $\hat{\rho}$ ranging from $1.24$ to $1.43$, both becoming strictly more conservative as $K$ grows, a direction that the refinement step of Section~\href{run:./ZH2023_Supplemental_Material.pdf}{S2} requires.

The marginal ESS efficiency gain from each additional interim look diminished rapidly.
At $\beta=0.1$, the EIF under $H_{1}$ decreased from $0.777$ at $K=2$ to $0.669$ at $K=6$ and to $0.663$ at $K=7$. At $\beta=0.2$, the EIF under $H_{1}$ fell from $0.848$ at $K=2$ to $0.769$ at $K=6$. At both power levels, $K=6$ was the smallest number of stages whose EIF lay within a factor $(1+\tau)$ of the minimum over the range searched, with $\tau=0.01$. Consequently, the recommended effective number of stages was $K^{\star}=6$ at both power levels under both families. The stage-merging rule did not bind anywhere in Table~\ref{tab:311}, with $K_{\text{eff}}=K$ at every $K$ under both families and both power levels. The jointly optimal schedules therefore contained no operationally indistinguishable looks at the default threshold $\epsilon=0.05$, and $K^{\star}$ was determined by the tolerance criterion of Eq.~(\ref{eqn:2301}) alone. The margins on that criterion were narrow. Under the HSD family at $\beta=0.1$, the EIF at $K=6$ exceeded the smallest value attainable over the range $K\le 7$ by $0.94\%$ and that at $K=5$ exceeded it by $2.31\%$, against the tolerance $\tau=0.01$. At $\beta=0.2$ the corresponding excesses were $0.53\%$ and $1.39\%$. The KD family gave the same excesses at $\beta=0.1$, and $0.58\%$ and $1.48\%$ at $\beta=0.2$. The recommendation was therefore sensitive to $\tau$, and the stage-merging threshold did not reduce this sensitivity (Section~\href{run:./ZH2023_Supplemental_Material.pdf}{S6}).

The criterion is relative to the search range as well, because Eq.~(\ref{eqn:2301}) compares each $K$ with the smallest EIF attainable over the range examined, and widening that range can only lower the reference value. The eight-stage optima of Section~\ref{sec:251} affected the two power levels differently. At $\beta=0.2$ their smallest gaps were $0.048$ under the HSD family and $0.046$ under the KD family, so both merged to seven stages at $\epsilon=0.05$. Even without merging, their EIF under $H_{1}$ of $0.763$ and $0.762$ would have left $K=6$ within the tolerance, so $K^{\star}$ remained $6$. At $\beta=0.1$ the smallest gap was $0.061$ under the HSD family and $0.055$ under the KD family, so neither optimum merged. Their EIF under $H_{1}$, $0.6594$ and $0.6597$, was low enough that $K=6$ no longer met the tolerance, and $K^{\star}$ became $7$. These eight-stage optima had a smaller MSS than the seven-stage optima, so they did not satisfy the cross-$K$ requirements of the refinement step (Section~\href{run:./ZH2023_Supplemental_Material.pdf}{S2}) and serve only to test how $K^{\star}$ depends on the range searched. Extending the range to $K=8$ therefore left the recommendation of six analyses unchanged at 80\% power, whereas at 90\% power the recommendation held conditionally on the range $K \le 7$.

At the recommended $K^{\star}=6$, the jointly optimal design reduced the ESS under $H_{1}$ by about 33\% at 90\% power (EIF$=0.669$) and 23\% at 80\% power (EIF$=0.769$), relative to the fixed-sample design. Under the HSD family, the corresponding MSS inflation was 14.0\% and 11.1\%, respectively (14.1\% and 10.8\% under the KD family). The MIF increased slowly with $K$ under both families. At $\beta=0.1$ it increased from $1.075$ at $K=2$ to approximately $1.145$ at $K=7$, and at $\beta=0.2$ it increased from approximately $1.07$ at $K=2$ to approximately $1.11$ at $K=7$, with the marginal increment from $K=6$ to $K=7$ at most $0.005$ across both families. Figure~\ref{fig:311} also shows the EIFs under $H_{0}$ and $H_{0}/H_{1}$, which behaved similarly across the two families. The EIF under $H_{0}$ stayed close to the MIF throughout, because the probability of early stopping under the null is very low. The EIF under $H_{0}/H_{1}$ remained close to 1, corresponding to the design point at which the ESS is approximately equal to the sample size required in the fixed-sample design.

Compared with the scheduling-only optimal Pocock design of \citet{he2025}, the jointly optimal design required strictly smaller MSS inflation at every $K$ and both power levels. It also achieved at least the same ESS saving under $H_{1}$, except at $K=2$ and $\beta=0.1$, where the convexity constraint pinned $\hat{\gamma}$ at $0$ and optimal Pocock attained a marginally smaller EIF ($0.7761$ versus $0.7770$). Pareto-dominance therefore held at both power levels for $K \geq 3$. At $\beta=0.1$ and the recommended $K^{\star}=6$ the EIFs under $H_{1}$ were nearly identical ($0.669$ versus $0.671$), so the dominance lay mainly in the MIF ($1.140$ versus $1.197$). At $\beta=0.2$ the joint optimum was strictly better on both axes ($0.769$ versus $0.785$ in EIF and $1.111$ versus $1.219$ in MIF). Compared with the scheduling-only optimal OBF design, the jointly optimal design had a smaller EIF under $H_{1}$ at the cost of a larger MIF, placing it at a different point on the efficiency-inflation trade-off frontier. The joint optimum therefore retained ESS efficiency comparable to that of the optimal Pocock design while requiring a substantially smaller MSS. Scheduling-only optimisation could not achieve this combination under either conventional boundary shape. For conciseness, the remainder of Section~\ref{sec:3} reports results under the HSD family only.

\subsection{Benchmark against unrestricted joint optima}
\label{sec:32}
Table~\ref{tab:321} summarises the EIF under $H_{1}$ for three designs. These are the Brittain--Bailey unrestricted joint optimum (free over critical values and information rates) and two HSD family joint optima. One omits the convex alpha-spending constraint, so that $\hat{\gamma}$ can take any sign, and the other is convexity-constrained with $\hat{\gamma}\in[-6,0]$. The Brittain--Bailey EIF values were taken from Table~2 of \citet{brittain1993}, with their two- and three-stage ESS values normalised by the sample size required in the corresponding fixed-sample design. The convexity-constrained values were taken from the same joint optimisation that produced Table~\ref{tab:311}, so the two tables agree at the shared $(K,\beta)$ cells.

\begin{table}[!ht]
	\centering
	\begin{tabular}{ccccccc}
	\hline
	 & & \multicolumn{1}{c}{Brittain--Bailey} &
	\multicolumn{2}{c}{Unconstrained HSD} &
	\multicolumn{2}{c}{Constrained HSD} \\
	\cline{4-5} \cline{6-7}
	$K$ & $\beta$ & EIF$(H_{1})$ &
	EIF$(H_{1})$ & $\hat{\gamma}$ &
	EIF$(H_{1})$ & $\hat{\gamma}$ \\
	\hline
	2 & 0.1 & 0.7759 & 0.7758 & 0.76 & 0.7770 & 0.00 \\
	2 & 0.2 & 0.8479 & 0.8479 & -0.54 & 0.8479 & -0.54 \\
	3 & 0.1 & 0.7182 & 0.7179 & 0.56 & 0.7187 & 0.00 \\
	3 & 0.2 & 0.8065 & 0.8057 & -0.84 & 0.8057 & -0.84 \\
	\hline
	\end{tabular}
	\caption{Comparison of EIFs under $H_{1}$ for the unrestricted joint optimum of \citet{brittain1993}, the unconstrained HSD family joint optimum and the convexity-constrained HSD family joint optimum ($\hat{\gamma}\in[-6,0]$), for $K\in\{2,3\}$ and $(\alpha,\beta)\in\{(0.025,0.1),(0.025,0.2)\}$ under efficacy-only stopping. At $\beta=0.2$ the constrained optimum is interior, and hence also unconstrained, so its values are shown in both columns.}
	\label{tab:321}
\end{table}

Across all four $(K, \beta)$ configurations, the unconstrained HSD family joint optima agreed with the unrestricted Brittain--Bailey optima to within $0.001$ in EIF under $H_{1}$. The unconstrained HSD values lay slightly below the published values in three cells, which exact optima would not allow. \citet{brittain1993} evaluated their optima at a single maximum sample size for each power, for example $1{,}110$ per $1{,}000$ fixed-sample subjects at 90\% power, and described the resulting ESS as essentially minimum, so these small differences reflect the precision of the published values. Restricting the boundary shape to a one-parameter HSD family therefore imposed a negligible efficiency loss relative to free optimisation over critical values. This is consistent with \citet{jennison2006}, who found for tests with binding futility boundaries that simple designs within a monomial spending family attain efficiency close to that of designs with arbitrary group sizes and arbitrary boundaries. The behaviour of the additional convex alpha-spending constraint depended on the power configuration. At $\beta=0.1$ the unconstrained HSD optimum occurred at $\hat{\gamma}\approx 0.76$ ($K=2$) and $\hat{\gamma}\approx 0.56$ ($K=3$), corresponding to mildly Pocock-like early spending. Imposing the convexity constraint pushed $\hat{\gamma}$ to the boundary value $0$ at the cost of a $0.0008$--$0.0012$ increase in EIF. At $\beta=0.2$ the constrained optimum was in the interior of the feasible region ($\hat{\gamma}\approx-0.54$ at $K=2$ and $\hat{\gamma}\approx-0.84$ at $K=3$), so the convexity constraint was non-binding and the constrained optimum was also the unconstrained one. A separate unconstrained search ended elsewhere on the nearly flat ridge in $\gamma$, at the same EIF under $H_{1}$ to four decimal places.

\subsection{Scaling analysis}
\label{sec:33}
Table~\ref{tab:331} summarises the fitted exponents $\hat{p}$ and $\hat{q}$ in the models $\mathrm{EIF}_{K} \approx a + bK^{-p}$ and $\mathrm{MIF}_{K} \approx a' + b'K^{q}$, with nonlinear least-squares standard errors and goodness-of-fit summaries $R^{2}_{\mathrm{EIF}}$ and $R^{2}_{\mathrm{MIF}}$. Results are given for $K \in \{2, \ldots, 7\}$ at $(\alpha,\beta) \in \{(0.025,0.1),(0.025,0.2)\}$ under efficacy-only stopping, for the jointly optimal HSD family design and the scheduling-only optimal OBF and Pocock designs of \citet{he2025}. 
Figure~\ref{fig:331} plots the EIF and MIF curves against $K$ on a log $K$-scale with the fitted power laws overlaid, and Figure~\ref{fig:332} the $(\mathrm{EIF}_{K}, \mathrm{MIF}_{K})$ trade-off frontier for $K$ from 2 to 7. The corresponding scaling analysis under designs with both efficacy and non-binding futility stopping boundaries is reported in Section~\href{run:./ZH2023_Supplemental_Material.pdf}{S5}.

\begin{table}[!ht]
	\centering
	\begin{tabular}{llcccc}
	\hline
	Design & $\beta$ & $\hat{p}$ (SE) & $R^{2}_{\mathrm{EIF}}$ & $\hat{q}$ (SE) & $R^{2}_{\mathrm{MIF}}$ \\
	\hline
	Optimal HSD & 0.1 & 1.31 (0.00) & 1.0000 & -0.80 (0.05) & 0.9997 \\
	Optimal HSD & 0.2 & 1.30 (0.02) & 1.0000 & -0.51 (0.20) & 0.9959 \\
	Optimal OBF & 0.1 & 1.20 (0.02) & 1.0000 & -0.61 (0.10) & 0.9990 \\
	Optimal OBF & 0.2 & 1.28 (0.06) & 0.9997 & -1.01 (0.20) & 0.9966 \\
	Optimal Pocock & 0.1 & 1.33 (0.00) & 1.0000 & -0.91 (0.07) & 0.9996 \\
	Optimal Pocock & 0.2 & 1.40 (0.03) & 0.9999 & -1.03 (0.11) & 0.9989 \\
	\hline
	\end{tabular}
	\caption{Fitted exponents $\hat{p}$ in the model $\mathrm{EIF}_{K} \approx a + bK^{-p}$ under $H_{1}$ and $\hat{q}$ in the model $\mathrm{MIF}_{K} \approx a' + b'K^{q}$, with goodness-of-fit $R^{2}_{\mathrm{EIF}}$ and $R^{2}_{\mathrm{MIF}}$ for $K \in \{2, \ldots, 7\}$ at $(\alpha,\beta) \in \{(0.025,0.1),(0.025,0.2)\}$ under efficacy-only stopping. Designs are the jointly optimal HSD family design and the scheduling-only optimal OBF and Pocock designs of \citet{he2025} computed under the canonical Lan--DeMets formulation. Standard errors are obtained from nonlinear least squares. The underlying $\mathrm{EIF}_{K}$ and $\mathrm{MIF}_{K}$ values are shown in Figure~\ref{fig:331}.}
	\label{tab:331}
\end{table}

\begin{figure}[!ht]
	\centering
	\includegraphics[width=\linewidth]{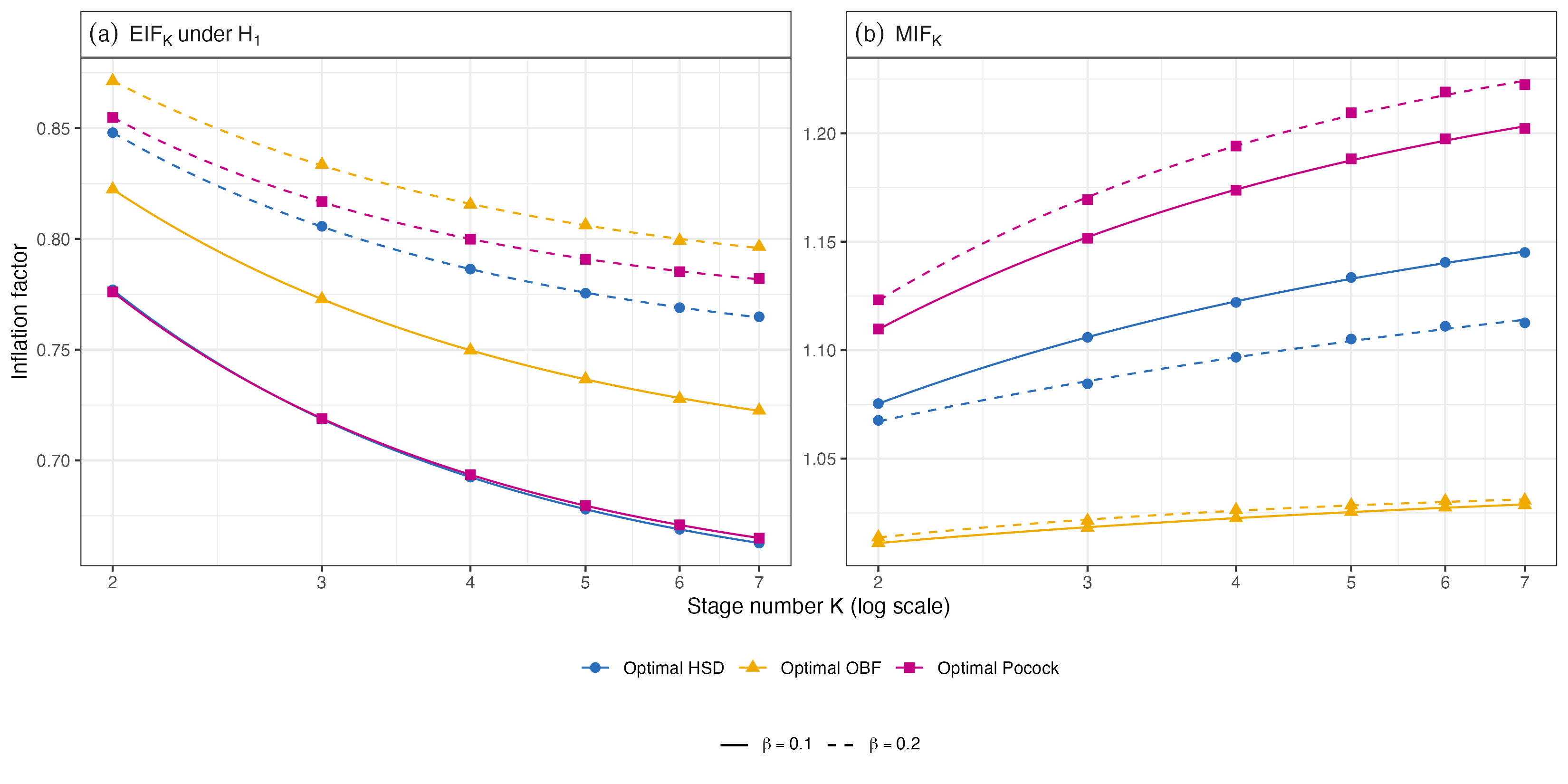}
	\caption{$\mathrm{EIF}_{K}$ under $H_{1}$ (panel~a) and $\mathrm{MIF}_{K}$ (panel~b) as functions of the stage number $K$ on a log $K$-scale for $(\alpha,\beta) \in \{(0.025,0.1),(0.025,0.2)\}$ under efficacy-only stopping. Three designs are overlaid: the jointly optimal HSD family, and the scheduling-only optimal OBF and Pocock designs of \citet{he2025} computed under the canonical Lan--DeMets formulation. Smooth curves show the fitted power laws $\mathrm{EIF}_{K} \approx a + b K^{-p}$ in panel~(a) and $\mathrm{MIF}_{K} \approx a' + b' K^{q}$ in panel~(b). The corresponding fitted exponents and standard errors are summarised in Table~\ref{tab:331}.}
	\label{fig:331}
\end{figure}

\begin{figure}[!ht]
	\centering
	\includegraphics[width=\linewidth]{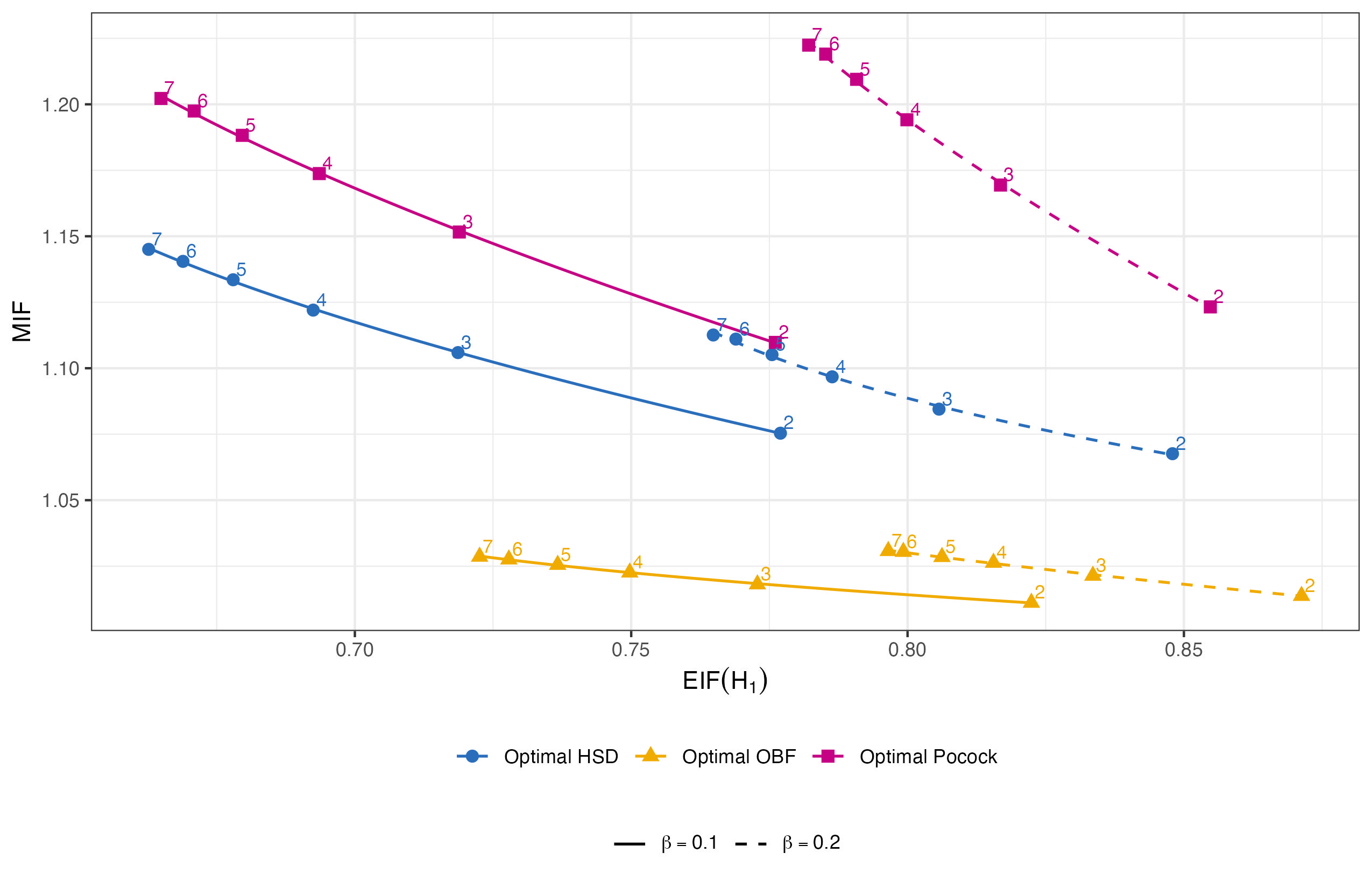}
	\caption{Trade-off frontier between the EIF under $H_{1}$ and the MIF as the stage number $K$ varies from 2 to 7 at $(\alpha,\beta) \in \{(0.025,0.1),(0.025,0.2)\}$ under efficacy-only stopping for the jointly optimal HSD family design and the scheduling-only optimal OBF and Pocock designs. Points represent the optimal designs at $K \in \{2,3,4,5,6, 7\}$ and are labelled by $K$. Curves are the smooth Pareto frontiers obtained by parametrising $K$ continuously in the fitted power laws $\mathrm{EIF}_{K} \approx a + bK^{-p}$ and $\mathrm{MIF}_{K} \approx a' + b'K^{q}$ from Table~\ref{tab:331}. The recommended $K^{\star}$ for the joint HSD family is $K^{\star}=6$ at both power levels.}
	\label{fig:332}
\end{figure}

The fitted EIF exponent for the jointly optimal HSD family design was $\hat{p} \approx 1.31$ at $\beta=0.1$ and $1.30$ at $\beta=0.2$. Under the canonical Lan--DeMets spending, the optimal OBF design yielded $\hat{p}$ in the range $1.20$--$1.28$ and the optimal Pocock design in the range $1.33$--$1.40$. The fitted exponent of the jointly optimal design therefore lay between those of the two scheduling-only designs, closer to that of the optimal Pocock design at $\beta=0.1$ and to that of the optimal OBF design at $\beta=0.2$. The EIF model fitted closely, with $R^{2}_{\mathrm{EIF}} \geq 0.9996$ for every design and power level.

The fitted MIF exponent $\hat{q}$ was negative for all three designs, so that under this parameterisation the MIF approaches its inflation ceiling $a'$ from below at rate $K^{-|q|}$. At $\beta=0.1$, $\hat{q}$ ranged from around $-0.61$ for optimal OBF, through $-0.80$ for the joint HSD family, to $-0.91$ for optimal Pocock. At $\beta=0.2$ it was close to $-1.0$ for both scheduling-only designs but only $-0.51$ for the joint HSD family, the least well determined of the six fits ($R^{2}_{\mathrm{MIF}}=0.9959$). By $K=6$, the MIF of every design lay within $0.06$ of its fitted ceiling $a'$. The marginal MIF increment from $K=6$ to $K=7$ never exceeded $0.005$. The MIF model fitted well for every design and power level, with $R^{2}_{\mathrm{MIF}} \geq 0.99$ throughout.

Figure~\ref{fig:332} shows the $(\mathrm{EIF}_{K}, \mathrm{MIF}_{K})$ trade-off frontier. The optimal Pocock frontier extended to lower EIF at the cost of substantially higher MIF (reaching $1.20$ at $K=7$ and $\beta=0.1$), whereas the optimal OBF frontier remained between $\mathrm{MIF}=1.01$ and $1.03$ but at higher EIF. The jointly optimal HSD frontier lay between the two in MIF but reached EIF values at or below those of optimal Pocock for $K \geq 3$.

The optimal $K$-stage GSD can be viewed as a discretisation of a continuous-time Brownian-motion optimal stopping problem of the kind studied by \citet{eales1992} and \citet{lai1973}. For a one-dimensional diffusion, restricting stopping to a grid of mesh $h$ changes the optimal value by order $h$ \citep{dupuis2005}, which corresponds to a $1/K$ rate of approach to the continuous-time optimum. The empirical exponents $\hat{p} \in [1.20, 1.40]$ were uniformly larger than this prediction. Each design approaches its own limit $a$, however, and only the unrestricted optimum is a discretisation of the continuous-time optimal stopping problem, so the ordering $\hat{p}_{\text{OBF}} < \hat{p}_{\text{Joint}} < \hat{p}_{\text{Pocock}}$ has no direct interpretation in that theory.

\subsection{The CORRECT trial}
\label{sec:34}
The redesign followed the configuration of Section~\ref{sec:261}, with the interim schedule and the efficacy stopping boundary jointly optimised under the HSD family.

\subsubsection{Joint optimum and event-count operating characteristics}
\label{sec:341}
Table~\ref{tab:341} reports the jointly optimal information rates at each nominal $K$ between 2 and 7, translated to event-count operating characteristics under the CORRECT OS configuration. By the invariance property of Section~\ref{sec:22}, these rates and the corresponding shape parameters carry over directly from Table~\ref{tab:311} at $\beta=0.1$, so no re-optimisation is needed for the survival setting. At $\beta=0.1$ the convex alpha-spending constraint was active at every $K$, so the optimal shape parameter equalled the boundary value $\hat{\gamma}=0$ throughout, corresponding to uniform spending. This is consistent with Section~\ref{sec:32}, where the constraint binds at $\beta=0.1$ but not at $\beta=0.2$.

Applying the selection rule of Section~\ref{sec:23} gave $K^{\star}=6$ for the jointly optimal HSD family design. With $d_{K}=652$, the default $\epsilon=0.05$ requires successive analyses to be at least $33$ events apart, and the smallest gap in the recommended schedule was $79$ events, so no analyses were merged. The margin on the tolerance criterion was narrow. With $K=7$ as the minimiser, at an expected number of events $\mathbb{E}(D \mid H_{1})=378.6$, the tolerance $\tau=0.01$ admits any design within $382.4$ events, and $K=6$ met this bound at $382.2$ events whereas $K=5$ did not ($387.4$ events). The slack at $K=6$ was therefore only $0.2$ events.

\begin{table}[!ht]
	\centering
	\begin{tabular}{cccccc}
	\hline
	$K$ & $\hat{t}_{1:K-1}$ (\%) & $d_{K}$ &
	$\mathbb{E}(D|H_{0})$ & $\mathbb{E}(D|H_{0}/H_{1})$ & $\mathbb{E}(D|H_{1})$ \\
	\hline
	2 & 51.0 & 614 & 610.6 & 569.0 & 443.9 \\
	3 & 37.7/65.7 & 632 & 626.6 & 570.9 & 410.6 \\
	4 & 31.5/51.4/72.7 & 641 & 635.2 & 572.6 & 395.7 \\
	5 & 27.8/44.5/61.4/77.6 & 648 & 641.3 & 574.0 & 387.4 \\
	\textbf{6} & 24.8/38.7/53.3/65.5/80.0 & 652 & 644.8 & 574.9 & 382.2 \\
	7 & 23.4/35.0/47.0/55.8/67.8/81.8 & 654 & 647.3 & 575.6 & 378.6 \\
	\hline
	\end{tabular}
	\caption{Event-count operating characteristics for the jointly optimal HSD family CORRECT redesign at each nominal $K$ between 2 and 7, at one-sided $\alpha=0.025$ and $\beta=0.1$ (90\% power). Reported are the jointly optimal information rates $\hat{\boldsymbol{t}}_{1:K-1}$ (\%), the maximum number of events $d_{K}$, and the expected number of events $\mathbb{E}(D \mid H_{h})$ under $H_{0}$, $H_{0}/H_{1}$ and $H_{1}$. The information rates and the shape parameter $\hat{\gamma}$ underlying these characteristics are taken from Table~\ref{tab:311} at $\beta=0.1$ by the invariance property of Section~\ref{sec:22}. The recommended $K^{\star}$ is highlighted in bold.}
	\label{tab:341}
\end{table}

Table~\ref{tab:342} reports the event-count operating characteristics for the candidate designs introduced in Section~\ref{sec:261} at $K=3$ (matching the original CORRECT configuration) and at $K^{\star}=6$, together with the original CORRECT design as a reference. At $K=3$, the jointly optimal HSD family design placed its interim analyses at $(37.7\%, 65.7\%)$, appreciably earlier than the schedule the trial actually followed at $(51.7\%, 74.2\%)$. Relative to the original design, the saving in expected events under $H_{1}$ was $68.7$ events ($410.6$ versus $479.3$, a $14.3\%$ reduction) at the cost of an increase in maximum events ($632$ versus $581$, an $8.8\%$ increment). At $K^{\star}=6$, the jointly optimal design further reduced $\mathbb{E}(D \mid H_{1})$ to $382.2$ events, saving a further $28.4$ events relative to its $K=3$ counterpart and $97.1$ events ($20.3\%$) relative to the original design, at $d_{K^{\star}}=652$ maximum events. Across the three optimal candidates at $K^{\star}=6$, the jointly optimal design saved $33.6$ events ($8.1\%$) under $H_{1}$ relative to the scheduling-only optimal OBF design ($415.8$ events). Against the scheduling-only optimal Pocock design, its expected number of events under $H_{1}$ was almost the same ($382.2$ versus $383.3$), with $32$ fewer maximum events ($652$ versus $684$). The gain in expected events over optimal Pocock was small because at 90\% power the convex alpha-spending constraint bound ($\hat{\gamma}=0$), which kept the joint design from spending alpha as early as the Pocock shape does. At 80\% power, where the constraint did not bind, the joint optimum was strictly better on both measures (Section~\ref{sec:31}). The Pareto dominance established on the information scale in Section~\ref{sec:31} therefore carries over to this configuration. Under $H_{0}$, each candidate design accumulated an expected number of events close to its maximum $d_{K}$, reflecting the small probability of early efficacy stopping when no treatment effect is present.

\begin{table}[!ht]
	\centering
	\begin{tabular}{lccccc}
	\hline
	Design & $K$ & $d_{K}$ & $\mathbb{E}(D|H_{0})$ & $\mathbb{E}(D|H_{0}/H_{1})$ & $\mathbb{E}(D|H_{1})$ \\
	\hline
	Original CORRECT & 3 & 581 & 579.4 & 555.0 & 479.3 \\
	Optimal OBF ($K=3$) & 3 & 582 & 580.0 & 550.1 & 441.5 \\
	Optimal Pocock ($K=3$) & 3 & 658 & 651.3 & 587.0 & 410.7 \\
	Optimal HSD joint ($K=3$) & 3 & 632 & 626.6 & 570.9 & 410.6 \\
	Optimal OBF ($K^{\star}=6$) & 6 & 587 & 584.8 & 545.7 & 415.8 \\
	Optimal Pocock ($K^{\star}=6$) & 6 & 684 & 675.7 & 595.9 & 383.3 \\
	Optimal HSD joint ($K^{\star}=6$) & 6 & 652 & 644.8 & 574.9 & 382.2 \\
	\hline
	\end{tabular}
	\caption{Event-count operating characteristics for the CORRECT redesign at $K=3$ (matching the original CORRECT two-interim configuration) and at the recommended $K^{\star}=6$: the maximum number of events $d_{K}$, and the expected number of events $\mathbb{E}(D \mid H_{h})$ under $H_{0}$, $H_{0}/H_{1}$ and $H_{1}$. The three candidate redesigns are the jointly optimal HSD family design, the scheduling-only optimal OBF design and the scheduling-only optimal Pocock design (each of the latter two using the optimal information rates of \citet{he2025} with the canonical Lan--DeMets alpha-spending function of the corresponding shape). The original CORRECT design is included as a reference at its realised information rates $(51.7\%, 74.2\%)$, evaluated on its efficacy schedule alone so that the comparison with the efficacy-only redesigns is like for like. Its only efficacy boundary is at $74.2\%$, as the analysis at $51.7\%$ was for futility. Recomputed for this efficacy-only schedule, its maximum is $581$ events rather than the planned $582$.}
	\label{tab:342}
\end{table}

\subsubsection{Calendar-time operating characteristics at the CORRECT baseline}
\label{sec:342}
Tables~\ref{tab:343} and~\ref{tab:344} summarise the calendar-time operating characteristics at the CORRECT baseline described in Section~\ref{sec:262}. Each design is sized separately to its own maximum number of events $d_{K}$, so the MSS varies across designs in proportion to $d_{K}$. Under $H_{1}$, the expected stopping time of every redesign candidate preceded the end of its enrolment, although its final analysis under $H_{1}$ would fall five to six and a half months after enrolment ends. The ESS under $H_{1}$ therefore fell about 10\% to 30\% below the MSS for every redesign candidate, and the efficiency gains apply to the number of patients as well as to trial duration. For the jointly optimal design at $K^{\star}=6$, $\mathbb{E}(N \mid H_{1})=573$ against $N_{K}=774$. The original design, by contrast, had an expected stopping time of $26.3$ months under $H_{1}$, after its enrolment ended at $25$ months, so its ESS was only $3\%$ below its MSS.

\begin{table}[!ht]
	\centering
	\begin{tabular}{lccccc}
	\hline
	Design & $K$ & $N_{K}$ & $\mathbb{E}(N|H_{0})$ & $\mathbb{E}(N|H_{0}/H_{1})$ & $\mathbb{E}(N|H_{1})$ \\
	\hline
	Original CORRECT & 3 & 690 & 689 & 681 & 668 \\
	Optimal OBF ($K=3$) & 3 & 691 & 690 & 676 & 625 \\
	Optimal Pocock ($K=3$) & 3 & 782 & 776 & 726 & 595 \\
	Optimal HSD joint ($K=3$) & 3 & 751 & 746 & 706 & 593 \\
	Optimal OBF ($K^{\star}=6$) & 6 & 697 & 696 & 678 & 607 \\
	Optimal Pocock ($K^{\star}=6$) & 6 & 813 & 805 & 742 & 575 \\
	Optimal HSD joint ($K^{\star}=6$) & 6 & 774 & 768 & 717 & 573 \\
	\hline
	\end{tabular}
	\caption{Sample size operating characteristics for the CORRECT redesign at the CORRECT baseline, with each design sized separately so that the MSS $N_{K}$ scales with its maximum number of events $d_{K}$ (size-for-events sizing based on the original CORRECT design, with $N_{K}=690$ over a 25-month accrual period). Reported are the MSS $N_{K}$ and the ESS $\mathbb{E}(N \mid H_{h})$ under $H_{0}$, $H_{0}/H_{1}$ and $H_{1}$. The compared designs and the modelling assumptions (exponential survival with control-arm median OS of $4.5$ months, planned accrual of 30 patients per month after a four-month ramp-up, 3\% annual drop-out, event-driven analyses) are detailed in Sections~\ref{sec:261} and~\ref{sec:262}.}
	\label{tab:343}
\end{table}

\begin{table}[!ht]
	\centering
	\begin{tabular}{lccccc}
	\hline
	Design & $K$ & $T_{K}$ (mo) & $\mathbb{E}(T|H_{0})$ & $\mathbb{E}(T|H_{0}/H_{1})$ & $\mathbb{E}(T|H_{1})$ \\
	\hline
	Original CORRECT & 3 & 31.3 & 28.9 & 28.9 & 26.3 \\
	Optimal OBF ($K=3$) & 3 & 31.4 & 28.9 & 28.7 & 24.9 \\
	Optimal Pocock ($K=3$) & 3 & 33.6 & 31.1 & 29.6 & 23.5 \\
	Optimal HSD joint ($K=3$) & 3 & 32.8 & 30.3 & 29.1 & 23.5 \\
	Optimal OBF ($K^{\star}=6$) & 6 & 31.5 & 29.1 & 28.4 & 23.8 \\
	Optimal Pocock ($K^{\star}=6$) & 6 & 34.3 & 31.8 & 29.7 & 22.2 \\
	Optimal HSD joint ($K^{\star}=6$) & 6 & 33.4 & 30.9 & 29.1 & 22.2 \\
	\hline
	\end{tabular}
	\caption{Trial duration operating characteristics for the CORRECT redesign at the CORRECT baseline (companion to Table~\ref{tab:343}, with compared designs, modelling assumptions and the size-for-events sizing convention given in Section~\ref{sec:262}). Reported are the MTD $T_{K}$ and the ETD $\mathbb{E}(T \mid H_{h})$ under $H_{0}$, $H_{0}/H_{1}$ and $H_{1}$, all in months.}
	\label{tab:344}
\end{table}

At $K=3$, the jointly optimal design reduced the ESS under $H_{1}$ from $668$ patients for the original design to $593$, and the ETD from $26.3$ to $23.5$ months, at an MSS of $751$ rather than $690$. At the recommended $K^{\star}=6$ it reduced the ESS further to $573$ patients and the ETD to $22.2$ months, saving $95$ patients ($14.2\%$) and $4.1$ months ($15.6\%$) relative to the original design at an MSS of $774$. Against the scheduling-only optimal OBF design at $K^{\star}=6$, the joint optimum saved $34$ patients on the ESS ($573$ versus $607$) at the cost of $77$ additional patients on the MSS. Against the scheduling-only optimal Pocock design it achieved the same ETD ($22.2$ months) and a nearly identical ESS ($573$ versus $575$) while committing $39$ fewer patients on the MSS ($774$ versus $813$).

Evaluated with its non-binding futility rule in place, as described in Section~\ref{sec:261}, at its planned $582$ events and $N_{K}=690$, the original CORRECT design had $\mathbb{E}(N \mid H_{0})=566$ and $\mathbb{E}(N \mid H_{1})=666$. The reduction under $H_{0}$ reflects early stopping for futility, which the efficacy-only redesigns cannot do (Section~\ref{sec:261}). Section~\href{run:./ZH2023_Supplemental_Material.pdf}{S7} reports a redesign of the CLEOPATRA trial, in which median survival is long relative to the accrual period rather than short as in CORRECT.

\section{Discussion}
\label{sec:4}
We have jointly optimised the scheduling of interim analyses and the shape of the stopping boundaries in group sequential trials, minimising the ESS under a specified alternative hypothesis while controlling overall type I and type II error rates. 
We report three main findings. 
First, the joint optima under the HSD family \citep{hwang1990}, with the shape parameter unconstrained, agreed with the unrestricted numerical optima reported for the two- and three-stage settings in \citet{brittain1993}. Therefore, for two- and three-stage efficacy-only designs, restricting the boundary shape to a one-parameter spending function family causes a negligible efficiency loss relative to free optimisation over critical values. Second, for $K \geq 3$ the jointly optimal design Pareto-dominated the scheduling-only optimal Pocock design of \citet{he2025}. At 90\% power, it achieved a nearly identical ESS saving under $H_{1}$ while incurring strictly smaller MSS inflation. At 80\% power, it achieved both a smaller EIF under $H_{1}$ and a smaller MIF.
Against the scheduling-only optimal OBF design, the jointly optimal design had a smaller EIF under $H_{1}$ at the cost of a larger MIF. Third, at 90\% power, the joint optimum lay on the boundary of the convex alpha-spending feasible region for all stage numbers from two to seven, corresponding to uniform alpha-spending.
At 80\% power the optimum moved interior and, under the refinement requirements, grew more conservative as the number of interim looks increased.
Across both power levels, the HSD and KD families yielded numerically indistinguishable operating characteristics. 

The convex alpha-spending constraint excludes Pocock-like aggressive early spending while retaining shapes from uniform to strongly conservative, so it restricts the joint optimisation to clinically sensible designs without sacrificing meaningful flexibility in the boundary shape. The unconstrained optima of Section~\ref{sec:32} confirm that the constraint is binding: at 90\% power the unconstrained HSD optimum lies at $\hat{\gamma}\approx 0.76$ ($K=2$) and $\hat{\gamma}\approx 0.56$ ($K=3$), in the Pocock-like half of the parameter space. The same qualitative conclusion was reached by \citet{ham2026}, who optimised the rejection criterion directly and likewise found that the unconstrained optimum spends the alpha budget aggressively at early analyses. Section~\ref{sec:32} quantifies its cost at $K \in \{2,3\}$ as at most $0.0012$ in EIF under $H_{1}$.

The stage-merging rule addresses a practical difficulty that can arise whenever the interim schedule is optimised for a prespecified number of analyses. The scheduling-only optimal OBF design of \citet{he2025} at $K=7$ and $\beta=0.2$, for example, has two analyses only $4.2\%$ of the information apart. When the prespecified number of analyses is larger than the problem effectively requires, the optimiser may drive adjacent information rates so close together that the corresponding interim analyses become operationally indistinguishable. Suppressing such near-coincident analyses through a penalty would distort the optimum, so we use them to determine the appropriate effective stage number. At the default threshold $\epsilon=0.05$, the effective stage number equalled the nominal stage number in every efficacy-only and efficacy plus non-binding futility configuration with $K \le 7$. Within that range, merging occurred only at the larger threshold $\epsilon=0.10$, and only at $K=7$. The eight-stage efficacy-only optima of Section~\ref{sec:31} did merge at the default threshold at 80\% power under both families, so, for the joint designs, schedules with near-coincident analyses begin to appear once $K$ exceeds $7$. The rule is therefore needed in practice, and nothing in the formulation of Eq.~(\ref{eqn:2203}) prevents such schedules at larger $K$ or under other spending function families.

As the stage-merging rule did not bind at the default threshold for $K \le 7$, the recommendation of $K^{\star}=6$ at both 80\% and 90\% power rests on the tolerance criterion alone. The ESS at $K=6$ lay within a factor $(1+\tau)$ of the minimum over $K \le 7$, with $\tau=0.01$, whereas the ESS at $K=5$ did not. At 90\% power, extending the range to $K=8$ raised $K^{\star}$ to $7$. The margins are narrow enough that $\tau$ should be reported and varied explicitly rather than treated as a fixed default. 
By the tolerance criterion, the efficiency gain was concentrated in the first five interim analyses, that is in six analyses in total. \citet{todd2001} suggested that planning for between four and eight interim analyses is sensible on administrative grounds, and our efficiency results place the choice at the lower end of that range. This complements the observation in \citet{he2025} that marginal ESS reductions diminish substantially after three interim analyses. A rigorous asymptotic characterisation of the scaling exponents of Section~\ref{sec:33} would require extending the analysis of \citet{dupuis2005} to the constrained-optimisation setting considered here, which we leave for future work.

The operating characteristics presented in the main text focus on efficacy-only GSDs. The corresponding analysis for GSDs with both efficacy and non-binding futility boundaries introduces a second shape parameter into the optimisation, denoted by $\gamma^{*}$ under the HSD family and $\rho^{*}$ under the KD family. These results are reported in full in Section~\href{run:./ZH2023_Supplemental_Material.pdf}{S5}. Much of the structure established in Section~\ref{sec:3} carries over. The invariance property continues to hold, the two spending function families agreed on the EIF under $H_{1}$ to within $0.001$, and the EIF again followed a power law in $K$. The families differed by up to $0.023$ in MIF at $80\%$ power, however, and the fitted EIF scaling exponents were milder than for efficacy-only designs. Adding a non-binding futility boundary gave a modest further reduction in the ESS under $H_{1}$, larger at $80\%$ power than at $90\%$, at the cost of an increase in MSS that grows with the type II error rate. It also shifted the recommended $K^{\star}$ upwards by one stage in three of the four family and power combinations considered, one of them by a margin below the numerical error of the EIF. $K^{\star}=7$ was then the largest $K$ searched and so a lower bound. Under the cross-$K$ requirements of the refinement step (Section~\href{run:./ZH2023_Supplemental_Material.pdf}{S2}), the two shape parameters move in opposite directions as $K$ increases. Because the EIF under $H_{1}$ is nearly flat in the futility shape, this direction is set by the refinement rather than by the ESS. The efficacy shape lies at or near the uniform-spending edge of the feasible region and becomes slightly more conservative, whereas the futility shape is more conservative than uniform spending and becomes less so. The conservative level of early futility spending, though not its trend in $K$, is consistent with the asymmetry between the two kinds of early stop. An early efficacy stop terminates the trial successfully, whereas an early futility stop forfeits the chance to detect a true treatment effect. The selected designs keep efficacy spending at or near uniform and spend type II error conservatively at early looks.

As the joint optimum is invariant to the endpoint type and effect size, it can be tabulated once and reused. Table~\ref{tab:311} covers efficacy-only designs, and the corresponding table in Section~\href{run:./ZH2023_Supplemental_Material.pdf}{S5} covers designs with both efficacy and non-binding futility stopping boundaries. They are reference optimal designs under the framework stated here, namely the canonical joint distribution, the HSD or KD family, the convex error-spending constraint, minimisation of the ESS under $H_{1}$, and the tolerance $\tau$ that defines $K^{\star}$. They are not universally optimal group sequential schedules. 

The practical utility of the proposed approach has been shown through a redesign of the CORRECT trial in previously treated metastatic colorectal cancer, involving survival endpoints under proportional hazards. We compared the jointly optimal design with the scheduling-only optimal OBF and Pocock designs of \citet{he2025} on both the event-count and calendar-time scales, with each design sized separately to its own maximum number of events under the standard size-for-events convention. At $K^{\star}=6$, the jointly optimal HSD family design saved about $97$ expected events under $H_{1}$ relative to the original CORRECT design and $34$ relative to optimal OBF, and matched optimal Pocock on expected events with $32$ fewer maximum events. On the calendar-time scale, its Pareto dominance over optimal Pocock carried over to the patient scale, with an ESS under $H_{1}$ of $573$ against $575$ and an MSS of $774$ against $813$, at the same ETD under $H_{1}$ ($22.2$ months). Because the median survival is short relative to the accrual period, the expected stopping time under $H_{1}$ preceded the end of enrolment, so these savings were realised in patients as well as in trial duration.

The applicability to survival endpoints under proportional hazards, set out in Section~\ref{sec:21} and illustrated through the CORRECT redesign in Sections~\ref{sec:26} and \ref{sec:34}, relies on the asymptotic normality of the log-rank test statistic and on the information-time relation $I_{k}=d_{k}r/(1+r)^{2}$. On the events scale, Proposition~\ref{prop:2201} continues to apply, so the reference tables of Section~\ref{sec:31} carry over directly, at any allocation ratio, with ``events'' substituted for ``sample size''. For operational planning, the calendar-time operating characteristics depend additionally on the accrual pattern, the baseline hazard and the hazard ratio, as Section~\ref{sec:342} makes explicit. Where the optimised endpoint is not the primary one, its interim schedule is commonly inherited from the primary endpoint's, and relocating its looks would require either decoupling the two analyses or scheduling both endpoints jointly. Extending the joint optimisation to a hierarchical family of endpoints is therefore a direction for future work. A further extension is to relax the proportional hazards assumption, for example through piecewise-exponential hazards with time-varying treatment effects, or through weighted log-rank tests that preserve approximate normality under non-proportional hazards. These retain an approximately multivariate normal joint distribution with the same covariance structure, but the drift no longer grows in proportion to $\sqrt{t_{k}}$, so the optimisation machinery carries over with a modified drift and the reference tables would need to be recomputed.

Like that of \citet{he2025}, the formulation adopted here assumes that treatment outcomes are observed immediately while enrolment is halted at the interim analysis. This is uncommon in trial practice, and relaxing the assumption is a related but more immediate generalisation. Follow-up duration and continuous enrolment for non-survival endpoints can be accommodated by replacing the information rates in the ESS with their pipeline-adjusted counterparts, as set out by \citet{togo2013} and \citet{he2025}. The joint optimisation of Eq.~(\ref{eqn:2203}) would extend to that objective without structural change, and the efficiency gains reported here would be attenuated to the extent that patients already in the pipeline cannot be recovered by an early stop.

We minimise the ESS under $H_{1}$ as the primary optimality criterion because it captures the efficiency gain of a GSD when the trial stops early for efficacy. The true effect size may in practice be smaller than that assumed under $H_{1}$. Several alternatives can replace the objective in Eq.~(\ref{eqn:2203}) without structural change. These include a weighted combination of ESS values across selected effect sizes \citep{xi2017}, an integration of the ESS over a distribution of plausible effect sizes \citep{lokhnygina2008}, and a weighted combination of the ESS, the MSS and the standard deviation of the sample size \citep{grayling2021}. No explicit constraint on the MSS is imposed, but the MSS is reported alongside the ESS throughout, so the inflation cost of each design is visible.

The stopping boundaries are characterised through the HSD family of \citet{hwang1990} and the KD family of \citet{kim1987}. Other parametric families \citep{wang1987,emerson1989,pampallona1994,jennison1987} can be substituted in the same way. A more ambitious extension is to optimise the interim schedule jointly with the full set of critical values $\boldsymbol{u}_{1:K}$ and $\boldsymbol{l}_{1:K}$, without restricting to a parametric family. This has been done for tests with binding futility boundaries and up to six analyses \citep{eales1992,jennison2006}, and for two- and three-stage efficacy-only designs \citep{brittain1993}. \citet{ham2026} optimised the rejection criterion directly at fixed analysis times through a sample-average approximation solved by mixed-integer linear programming. Combining that formulation with optimisation over the interim schedule would provide a way to obtain the fully unconstrained joint optimum at larger $K$, at substantially greater computational cost. The benchmark of Section~\ref{sec:32} bounds the gain from such an extension only at $K \in \{2,3\}$, where the parametric restriction costs at most $0.001$ in EIF. For tests with binding futility boundaries and up to six analyses, \citet{jennison2006} found this gain to be at most 2\% of the fixed sample size, even relative to a monomial spending family with only the first group size free.

The approach provides a practical basis for planning interim looks across a wide range of group sequential trials, jointly determining the interim analysis timings and the stopping boundary shape together with the effective number of analyses through the tolerance-based selection criterion. It is implemented in the extended \texttt{OptimInterim}, and it can achieve efficiency gains beyond those reported in \citet{he2025}.



\subsection*{Data availability statement}
This work uses no individual participant data. The design parameters underlying the case studies are taken from the published reports of the CORRECT trial \citep{grothey2013} and its regulatory statistical review \citep{fda2012}, and of the CLEOPATRA trial (Section~\href{run:./ZH2023_Supplemental_Material.pdf}{S7}). \texttt{OptimInterim} is openly available at \url{https://github.com/zhangyi-he/GSD_OptimInterim}, together with the stored designs reported in this work, the optimisation and refinement code used to produce them, and the code that generates every table and figure from them.

\subsection*{Financial disclosure}
The authors declare no funding associated with the work presented in this article.

\subsection*{Conflict of interest}
The authors declare no potential conflict of interests.






\bibliography{ZH2023_Manuscript}

\end{document}